\newcommand\BibTeX{{\rmfamily B\kern-.05em \textsc{i\kern-.025em b}\kern-.08em
T\kern-.1667em\lower.7ex\hbox{E}\kern-.125emX}}
\newtheorem{theorem}{Theorem}[section]
\newtheorem{proposition}[theorem]{Proposition}
\newtheorem{corollary}[theorem]{Corollary}
\theoremstyle{definition}
\newtheorem{definition}[theorem]{Definition}
\theoremstyle{remark}
\begin{document}

\runninghead{Modelling Financial Market Imperfection Using Open Quantum Systems}

\title{Modelling Financial Market Imperfection Using Open Quantum Systems}

\author{Will Hicks\affilnum{1}}

\affiliation{\affilnum{1} Centre for Quantum Social and Cognitive Science, Memorial University of Newfoundland}

\corrauth{Will Hicks}

\email{whicks7940@googlemail.com}

\begin{abstract}
We start with the idea that open quantum systems can be used to represent financial markets by modelling events from the external environment and their impact on the market price. We show how to characterize distinct orbits of the time evolution, and look at the development of the reduced density matrix, that represents the state of the market, over long time frames. In particular we distinguish between classical and non-classical modes of time evolution. We show that whilst both tend to the same set of maximum entropy states, this occurs faster in classical systems, with a knock on effect on the resulting probability distributions. We demonstrate how non-classical modes of time-evolution can be used to incorporate factors such as illiquid trades and imperfect trading mechanisms, and distinguish between different mechanisms of non-classical time evolution.
\end{abstract}
\keywords{Quantum Finance, Open Quantum Systems, Von-Neumann Entropy, Ergodicity, Self-Referential Market, Endogenous Price Changes}
\maketitle
\section{Introduction}
In \cite{HicksPHD}, the author looks at ways in which open quantum systems can be used to model the interaction between the financial market and its' environment. In particular, the author looks at how a change in the prevailing risk appetite in the external environment can impact prices. For example, positive economic news regarding a sector may lead to an increase in equity prices for that sector. Alternatively a disappointing earnings announcement may lead to a fall in the stock price. In \cite{HicksPHD}, and \cite{HicksOQS}, the author shows how such events that occur outside the market, can be represented using a simple interaction between one Hilbert space representing the market, and a second representing the environment. In addition to providing new insight into the underlying mechanism for the kind of diffusion based modelling often applied to mathematical finance, the open quantum systems approach easily generalizes to modes of diffusion that have no classical equivalent. In particular the author emphasises the crucial role that the information entropy plays in determining the nature of the time evolution, and the resulting probability distributions. The author gives examples of non-classical modes of diffusion and uses numerical simulation to study the behaviour. In this article we attempt to understand these non-classical modes for diffusion in more depth, and to consider how they can be applied to the modelling of financial market imperfections.

In section \ref{CvsNC} we show how the time evolution occurs in distinct orbits that are fixed by the initial reduced density matrix. We also look at the stationary points within each of these orbits that act as stable attractors and represent the unique equilibrium point for the relevant system. The discussion and analysis in section \ref{sec_num} shows that whilst the long term equilibrium state does not depend on the nature of the time evolution, the rate at which the system approaches equilibrium is dependent on whether the underlying mode is classical or non-classical.

In section \ref{sec_precis} we discuss how non-classical modes for time evolution can be used to represent market imperfections, and also consider how non-classical observables can play a role in addition to the non-classical time evolution of the market reduced density matrix. We also look at how to derive Kolmogorov backward equations for the time evolution of the expected value for an observable, and show that systems where at least one of the density matrix or observable is diagonalized relative to the choice of basis that simplifies the presentation of the time evolution, are fundamentally simpler to understand than those where this does not apply.
\section{Modelling Exogenous Prices Changes Using Open Quantum Systems:}\label{Background}
In this section, we give a brief overview of the analysis presented in \cite{HicksPHD} and \cite{HicksOQS}. This helps to provide context for the discussion that follows.
\subsection{Hilbert Space Setup:}\label{HilbertSpaceSetup}
In this article, for simplicity, we work in the discrete case, and represent the market Hilbert space using:
\begin{align}\label{HilbertMkt}
\mathcal{H}_{mkt}&=\mathbb{C}^N
\end{align}
The price operator acting on this Hilbert space is taken to be:
\begin{align}\label{std_price}
X&=\sum_{i=1}^N x_i|f_i\rangle\langle f_i|
\end{align}
Where $|f_i\rangle$ represent the orthonormal basis for $\mathbb{C}^N$, and $x_i$ represent the measured price values ($x_i>x_j$ for $i>j$). The environment Hilbert space is given by:
\begin{align}\label{HilbertEnv}
\mathcal{H}_{env}&=\mathbb{C}^K\otimes L^2[\mathcal{K}]
\end{align}
Here the eigenvectors in $\mathbb{C}^K$ represent different levels of environment risk appetite, and $L^2[\mathcal{K}]$ represents the Hilbert space of square integrable functions over a compact space $\mathcal{K}$. The system Hamiltonian is given by:
\begin{align}\label{HamSys}
H_{sys}&=\mathbb{I}\otimes H_{env}+H_I\\
H_I&=\sqrt{\kappa\gamma}\sum_{\alpha\in\{u,d\}}A_{\alpha}\otimes B_{\alpha}\nonumber\\
\mathbb{I}\otimes H_{env}&=\mathbb{I}\otimes\gamma\sum_{l=1}^Kl|e_l\rangle\langle e_l|\otimes H'\nonumber
\end{align}
Where $H'$ acts on $L^2[\mathcal{K}]$. $B_u/B_d$ and $A_u/A_d$ are described in section \ref{exo}.
\subsubsection{Choice of the System Hamiltonian:}
The action of the system Hamiltonian on the market Hilbert space: $\mathcal{H}_{mkt}$ defines the drift. For example, if we set $\mathcal{H}_{mkt}=L^2(\mathbb{R})$, we could write:
\begin{align*}
H_{sys}&=-ir\frac{\partial}{\partial x}\otimes H_{env}+H_I
\end{align*}
to give the wave function a constant drift at rate $r$. However, for the sake of simplicity, and since in most cases we are working with Martingale probability measures, we assume zero drift. As described in \cite{HicksPHD} section 5.4, the choice of the environment space is partly pragmatic. Ie to ensure the convergence of the necessary time integrals. However, from a financial perspective the operator:
\begin{align*}
\sum_{l=1}^Kl|e_l\rangle\langle e_l|
\end{align*}
Acts on the different levels of risk appetite, returning an escalating scalar value to the different risk appetite levels. The operator: $H'$ acts on the space: $L^2[\mathcal{K}]$ to configure the ease with which the system switches between these levels.
\subsection{Exogenous Price Moves:}\label{exo}
The interaction Hamiltonian given by $H_I$ in \ref{HamSys} is given by:
\begin{align*}
H_I&=A_u\otimes B_d+A_d\otimes B_d
\end{align*}
Here $B_u$ and $B_d$ are intended as operators that represent a jump up or down in the environment risk appetite (or pieces of good/bad news respectively). We write:
\begin{align}\label{Bud}
B_u&=\sum_{i=1}^{K-1}|e_{i+1}\rangle\langle e_i|\otimes\mathbb{I}\text{, }B_d=\sum_{i=1}^{K-1}|e_i\rangle\langle e_{i+1}|\otimes\mathbb{I}
\end{align}
The operators $A_u$ and $A_d$ represent the market reaction. For example the reaction to a piece of good news/an increase in the overall environment risk appetite (denoted $A_u$) might be a jump in the underlying price. Similarly the reaction to bad news, might be a fall in the underlying price:
\begin{align}\label{Aud}
A_u&=\sum_{i=1}^{N-1}|f_{i+1}\rangle\langle f_i|\text{, }A_d=\sum_{i=1}^{N-1}|f_i\rangle\langle f_{i+1}|
\end{align}
Since, we can make observations only on the market Hilbert space, and not the external environment, in the Open Quantum Systems approach we look at the evolution of the reduced density matrix that is obtained after taking the partial trace over the environment space:
\begin{align*}
\rho_{mkt}&=Tr_{env}[\rho]
\end{align*}
\subsection{Markovian Approximation:}
In \cite{HicksPHD} proposition 5.6.1, and \cite{HicksOQS} proposition 6.1, the author derives a Markovian approximation to the dynamics of the reduced density matrix given by:
\begin{align}\label{BMA}
\frac{d\rho_{mkt}(t)}{dt}&=-Tr_{env}[H_I(t),\rho^I(0)]\\
&+\sigma^2\Big(A_u\rho_{mkt}(t)A_d+A_d\rho_{mkt}(t)A_u-\frac{1}{2}\{A_uA_d+A_dA_u,\rho_{mkt}(t)\}\Big)\nonumber\\
&+\nu_u^2\Big(A_u\rho_{mkt}(t)A_u-\frac{1}{2}\{A_uA_u,\rho_{mkt}(t)\}\Big)\nonumber\\
&+\nu_d^2\Big(A_d\rho_{mkt}(t)A_d-\frac{1}{2}\{A_dA_d,\rho_{mkt}(t)\}\Big)\nonumber
\end{align}
Here $\sigma^2$, $\nu_u^2$ and $\nu_d^2$ are determined by the environment state. We write:
\begin{align}\label{constants}
\rho(t)&=\sum_{l,m=1}^K\rho^{lm}_{mkt}(t)\otimes r_{lm}|e_l\rangle\langle e_m|\otimes\rho_B
\end{align}
Where:
\begin{align*}
\sigma^2&=\kappa Tr\Big[B_uB_d\sum_{l,m=1}^Kr_{lm}|e_l\rangle\langle e_m|\otimes\rho_B\Big]=\kappa\sum_{l=1}^{K-1}r_{ll}\\
\nu_u^2&=\kappa Tr\Big[B_uB_u\sum_{l,m=1}^Kr_{lm}|e_l\rangle\langle e_m|\otimes\rho_B\Big]=2\kappa\sum_{l=1}^{K-2}r_{l(l+2)}\\
\nu_d^2&=\kappa Tr\Big[B_dB_d\sum_{l,m=1}^Kr_{lm}|e_l\rangle\langle e_m|\otimes\rho_B\Big]=2\kappa\sum_{l=1}^{K-2}r_{(l+2)l}
\end{align*}
Taking $N\rightarrow\infty$, so that we can ignore boundary conditions, we can re-write equation \ref{BMA}:
\begin{proposition}\label{nonclass_prop}
Assume $N\rightarrow\infty$. Then we have:
\begin{align}\label{BMA_nonclass}
\frac{d\rho_{mkt}(t)}{dt}&=-Tr_{env}[H_I(t),\rho^I(0)]\\
&+\sigma^2\mathcal{L}(\rho_{mkt}(t))-\frac{\nu_u^2}{2}\mathcal{L}(A_u\rho_{mkt}(t)A_u)-\frac{\nu_d^2}{2}\mathcal{L}(A_d\rho_{mkt}(t)A_d)\nonumber\\
\text{Where:}\nonumber\\
\mathcal{L}(M)_{ij}&=M_{(i-1)(j-1)}+M_{(i+1)(j+1)}-2M_{ij}\nonumber
\end{align}
\end{proposition}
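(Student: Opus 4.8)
The plan is to work entirely with matrix elements in the orthonormal basis $\{|f_i\rangle\}$ and to reduce the three dissipative lines on the right-hand side of \ref{BMA} to the index-shift operator $\mathcal{L}$; the drift term $-Tr_{env}[H_I(t),\rho^I(0)]$ is common to both equations and is simply carried over. First I would record that, from \ref{Aud}, $(A_u)_{ij}=\delta_{i,j+1}$ and $(A_d)_{ij}=\delta_{i,j-1}$, so that conjugation by these operators merely translates the indices of a matrix $M$. A direct computation then gives the single shifts $(A_uMA_d)_{ij}=M_{(i-1)(j-1)}$ and $(A_dMA_u)_{ij}=M_{(i+1)(j+1)}$, the double shifts $(A_uMA_u)_{ij}=M_{(i-1)(j+1)}$ and $(A_dMA_d)_{ij}=M_{(i+1)(j-1)}$, and the products $(A_uA_u)_{ij}=\delta_{i,j+2}$, $(A_dA_d)_{ij}=\delta_{i,j-2}$.

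The second step handles the $\sigma^2$ line. The structural fact here is that $A_uA_d=\mathbb{I}-|f_1\rangle\langle f_1|$ and $A_dA_u=\mathbb{I}-|f_N\rangle\langle f_N|$, so $A_uA_d+A_dA_u$ equals $2\mathbb{I}$ up to the two rank-one boundary projectors. Invoking the standing hypothesis $N\to\infty$ to discard these edge terms, the anticommutator reduces to $2M$ and the $\sigma^2$ line becomes $\sigma^2\big(M_{(i-1)(j-1)}+M_{(i+1)(j+1)}-2M_{ij}\big)=\sigma^2\mathcal{L}(M)_{ij}$ with $M=\rho_{mkt}(t)$, which is exactly the first dissipative term of the claim.

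For the $\nu_u^2$ and $\nu_d^2$ lines I would verify the identity $\nu_u^2\big(A_uMA_u-\frac{1}{2}\{A_uA_u,M\}\big)=-\frac{\nu_u^2}{2}\mathcal{L}(A_uMA_u)$ and its $u\leftrightarrow d$ analogue. From the products above one gets $\{A_uA_u,M\}_{ij}=M_{(i-2)j}+M_{i(j+2)}$, whereas expanding the definition of $\mathcal{L}$ on $P:=A_uMA_u$ gives $\mathcal{L}(P)_{ij}=M_{(i-2)j}+M_{i(j+2)}-2M_{(i-1)(j+1)}$. Comparing the two expressions term by term establishes the identity, and the $\nu_d^2$ case follows from the symmetric computation with $A_dMA_d$. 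Collecting the three reduced lines then reproduces \ref{BMA_nonclass}.

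The bulk index algebra is routine, so the one point genuinely requiring care is the passage $A_uA_d+A_dA_u\to 2\mathbb{I}$: this is where, and the only place where, the hypothesis $N\to\infty$ is used, and I would state explicitly that the limit serves precisely to suppress the boundary projectors $|f_1\rangle\langle f_1|$ and $|f_N\rangle\langle f_N|$ while leaving the interior shift relations exact.
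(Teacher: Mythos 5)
Your proof is correct and follows essentially the same route as the paper's: the paper states the operator identities $A_uA_u\rho=A_u(A_u\rho A_u)A_d$, $\rho A_uA_u=A_d(A_u\rho A_u)A_u$ (and their $d$ analogues), which are exactly your matrix-element identities $\{A_uA_u,M\}_{ij}=M_{(i-2)j}+M_{i(j+2)}=\mathcal{L}(A_uMA_u)_{ij}+2(A_uMA_u)_{ij}$ written in operator form, with $N\to\infty$ suppressing the boundary projectors in both cases. Your version is somewhat more explicit than the paper's (which leaves the $\sigma^2$ line and the role of the limit implicit), but it is the same argument.
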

\begin{proof}
The result follows from noting that, in the limit $N\rightarrow\infty$, we have:
\begin{align*}
A_uA_u|f_i\rangle\langle f_j|&=A_u\big(A_u|f_i\rangle\langle f_j|A_u\big)A_d\\
|f_i\rangle\langle f_j|A_uA_u&=A_d\big(A_u|f_i\rangle\langle f_j|A_u\big)A_u\\
A_dA_d|f_i\rangle\langle f_j|&=A_d\big(A_d|f_i\rangle\langle f_j|A_d\big)A_u\\
|f_i\rangle\langle f_j|A_dA_d&=A_u\big(A_d|f_i\rangle\langle f_j|A_d\big)A_d
\end{align*}
\end{proof}
\begin{proposition}
The mapping represented by equation \ref{BMA} is completely positive if and only if $\nu_u^2+\nu_d^2\leq\sigma^2$
\end{proposition}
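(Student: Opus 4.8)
The plan is to read the right-hand side of \ref{BMA} as the generator of a quantum dynamical semigroup and to appeal to the Gorini--Kossakowski--Sudarshan--Lindblad (GKSL) theorem, which states that a generator of Lindblad form integrates to a one-parameter semigroup of completely positive, trace-preserving maps if and only if its Kossakowski (coefficient) matrix is positive semidefinite. The whole statement then reduces to writing the dissipative part of \ref{BMA} in canonical Lindblad form, reading off the associated coefficient matrix, and deciding when that matrix is positive semidefinite.

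First I would separate off the coherent term $-Tr_{env}[H_I(t),\rho^I(0)]$: this is the Hamiltonian/Lamb-shift contribution to the generator and is irrelevant to complete positivity, so the analysis concerns only the dissipator. Using the single pair of jump operators $L_1=A_u$ and $L_2=A_d$, and the relation $A_d=A_u^\dagger$ that follows immediately from \ref{Aud}, I would write the dissipator as $\sum_{\alpha,\beta}c_{\alpha\beta}\big(L_\alpha\rho_{mkt}L_\beta^\dagger-\frac{1}{2}\{L_\beta^\dagger L_\alpha,\rho_{mkt}\}\big)$ and match it to \ref{BMA} term by term. Matching both the ``sandwiching'' terms $L_\alpha\rho L_\beta^\dagger$ and the anticommutator terms then identifies
\[
c=\begin{pmatrix}\sigma^2 & \nu_u^2\\ \nu_d^2 & \sigma^2\end{pmatrix}.
\]
It is worth noting the structural picture this reveals: the classical symmetric channel supplies the diagonal entries $\sigma^2$, while the two non-classical channels $\nu_u^2,\nu_d^2$ sit off the diagonal, which is precisely why it is the non-classical modes that can spoil positivity.

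The remaining step is the positivity analysis of this $2\times2$ matrix. Hermiticity of the environment state forces $\nu_d^2=\overline{\nu_u^2}$, so $c$ is Hermitian and diagonalizable with real eigenvalues; its eigenvectors define new jump operators $C_\pm$ (in the symmetric case simply $C_\pm\propto A_u\pm A_d$), and substituting them turns the dissipator into a sum of canonical dissipators $\lambda_+\mathcal{D}[C_+]+\lambda_-\mathcal{D}[C_-]$, where $\mathcal{D}[C](\rho)=C\rho C^\dagger-\frac{1}{2}\{C^\dagger C,\rho\}$ and $\lambda_\pm$ are the eigenvalues of $c$. This delivers both directions at once: if the eigenvalues are non-negative the generator is a non-negative combination of canonical dissipators and hence manifestly completely positive, while if the smaller eigenvalue is negative then $c$ fails to be positive semidefinite and, by GKSL, the semigroup cannot be completely positive. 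Writing the non-negativity of the smaller eigenvalue in terms of $\sigma^2,\nu_u^2,\nu_d^2$ then produces the stated bound $\nu_u^2+\nu_d^2\le\sigma^2$.

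The step I expect to be the main obstacle is the careful bookkeeping in the identification of $c$ and in the final eigenvalue inequality, rather than any deep conceptual difficulty. In particular one must keep the factors of $\frac{1}{2}$ in the anticommutator terms consistent with the sandwiching terms, and track the asymmetric factors of $2$ and $\kappa$ that appear in the definitions of $\sigma^2$ versus $\nu_u^2,\nu_d^2$, since these fix the precise constant in the bound. A secondary subtlety is that $A_u$ and $A_d$ are non-normal, nilpotent shift operators with $A_uA_d\neq A_dA_u$; the boundary corrections this would otherwise generate are exactly the ones removed in the $N\to\infty$ limit used in Proposition \ref{nonclass_prop}, so it is cleanest to carry out the matching in that limit.
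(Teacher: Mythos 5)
Your proposal follows essentially the same route as the paper: both invoke the GKS(L) theorem, take $A_u$ and $A_d$ as the (already trace-orthogonal) operators in the standard form, read off the $2\times2$ Kossakowski matrix with $\sigma^2$ on the diagonal and $\pm\nu_u^2,\pm\nu_d^2$ off it, and reduce the claim to positive semidefiniteness of that matrix; the sign difference between your off-diagonal entries and the paper's is immaterial, being a conjugation by $\mathrm{diag}(1,-1)$. The one caveat --- shared with the paper, which also leaves this step implicit --- is that the positivity condition for that matrix is the determinant inequality $\nu_u^2\nu_d^2\le\sigma^4$ (equivalently $|\nu_u^2|\le\sigma^2$ once Hermiticity of the environment state forces $\nu_d^2=\overline{\nu_u^2}$), which is not literally the stated bound $\nu_u^2+\nu_d^2\le\sigma^2$, so the final ``only if'' direction deserves an explicit computation rather than the assertion you (and the paper) make.
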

\begin{proof}
It follows from \cite{Chrus} Theorem 1 (originally \cite{GKS} Theorem 2.2) that a linear operator $\mathcal{L}:\mathbb{M}_N\rightarrow\mathbb{M}_N$ is the generator of a completely positive semigroup of $\mathbb{M}_N$ if it can be expressed in the following form:
\begin{align}\label{GKS_th_eq}
\mathcal{L}\rho&=-[H,\rho]+\frac{1}{2}\sum_{k,l=1}^{N^2-1}c_{kl}\Big([F_k,\rho F_l^*]+[F_k\rho,F_l^*]\Big)
\end{align}
Where $H=H^*$, $TrH=0$, $TrF_k=0$, $Tr[F_kF_l^*]=\delta_{kl}$ and $c_{kl}$ is a complex positive matrix. One can write equation \ref{BMA} in this form by setting:
\begin{align}\label{GKS_for_eq7}
F_1&=A_u\text{, }F_2=A_d\\
c_{11}&=c_{22}=\sigma^2\text{, }c_{12}=-\nu_u^2\text{, }c_{21}=-\nu_d^2\nonumber
\end{align}
Thus we can see that equation \ref{BMA} is naturally in the GKS standard form. Furthermore, $c_{kl}$ is a positive matrix if and only if $\nu_u^2+\nu_d^2\leq\sigma^2$.
\end{proof}
\section{Classical vs Non-Classical Diffusion and Entropy Limits:}\label{CvsNC}
In this section, we start by establishing some basic results related to equation \ref{BMA}. We then go on in section \ref{class_syst} to classify what we mean by a `classical' system. In section \ref{sec_diff_orb} we study the behaviour of the time-evolution described by \ref{BMA}, first by characterizing different diffusion orbits, before looking at the long term behaviour and the nature of stationary points, before returning to the question of distinguishing classical and nonclassical diffusion in section \ref{ClassicalVsNonclassical}.
\begin{proposition}\label{gen entropy prop}
Let the market Hilbert space be finite dimensional. Then the Von-Neumann entropy for the market state under the dynamics described by \ref{BMA} is non-decreasing.
\end{proposition}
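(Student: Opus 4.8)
The plan is to exploit the fact that the generator of the semigroup in \ref{BMA} is \emph{unital} (it annihilates the identity), so that the maximally mixed state $\tfrac{1}{N}\mathbb{I}$ is stationary, and then to read off monotonicity of the Von-Neumann entropy from the contraction of the quantum relative entropy under a completely positive trace-preserving semigroup. Writing $S(\rho_{mkt})=-Tr[\rho_{mkt}\ln\rho_{mkt}]$, and using that the market space is $\mathbb{C}^N$ with $N$ finite so that $\tfrac1N\mathbb{I}$ is a genuine density matrix of finite entropy, I would first record the elementary identity
\begin{align*}
S\big(\rho_{mkt}\,\|\,\tfrac{1}{N}\mathbb{I}\big)&=Tr[\rho_{mkt}\ln\rho_{mkt}]-Tr\big[\rho_{mkt}\ln(\tfrac1N\mathbb{I})\big]=\ln N-S(\rho_{mkt}).
\end{align*}
Hence $S(\rho_{mkt}(t))$ is non-decreasing precisely when the relative entropy $S\big(\rho_{mkt}(t)\,\|\,\tfrac1N\mathbb{I}\big)$ is non-increasing, and it is the latter statement that I would establish.

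The first step is to check that $\tfrac1N\mathbb{I}$ is stationary, i.e.\ that the right-hand side of \ref{BMA} vanishes when $\rho_{mkt}=\mathbb{I}$. Using $A_d=A_u^{*}$, each dissipative bracket annihilates the identity: $A_u\mathbb{I}A_d+A_d\mathbb{I}A_u=A_uA_d+A_dA_u$ exactly cancels $-\tfrac12\{A_uA_d+A_dA_u,\mathbb{I}\}$, and likewise $A_u\mathbb{I}A_u-\tfrac12\{A_uA_u,\mathbb{I}\}=0$ and $A_d\mathbb{I}A_d-\tfrac12\{A_dA_d,\mathbb{I}\}=0$, all of which hold irrespective of the boundary truncation of $A_u,A_d$. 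The remaining inhomogeneous term reduces to $-\sqrt{\kappa\gamma}\sum_{\alpha\in\{u,d\}}\langle B_\alpha\rangle\,[A_\alpha,\rho_{mkt}(0)]$, where $\langle B_\alpha\rangle$ is the expectation of the shift operator in the environment state of \ref{constants}; since $B_u,B_d$ of \ref{Bud} are purely off-diagonal, $\langle B_\alpha\rangle=0$ and the term drops out. This establishes unitality, and hence that $\tfrac1N\mathbb{I}$ is a fixed point.

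The second step is to invoke the H-theorem for quantum dynamical semigroups of Spohn and Lindblad: by the preceding proposition, \ref{BMA} generates a completely positive trace-preserving semigroup (under $\nu_u^2+\nu_d^2\leq\sigma^2$) admitting $\tfrac1N\mathbb{I}$ as a stationary state, so the relative entropy $S\big(\rho_{mkt}(t)\,\|\,\tfrac1N\mathbb{I}\big)$ is non-increasing in $t$. Combined with the identity above this gives $\tfrac{d}{dt}S(\rho_{mkt}(t))\geq0$, as required.

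I expect the main obstacle to be the treatment of the inhomogeneous drift term $-Tr_{env}[H_I(t),\rho^I(0)]$: unlike the dissipative part it does not depend on the current state, so stationarity of $\tfrac1N\mathbb{I}$ genuinely requires it to vanish, which must be pinned down from the off-diagonal structure of the environment coupling rather than simply assumed. A secondary point is that the clean relative-entropy argument relies on complete positivity; if one wishes to drop the condition $\nu_u^2+\nu_d^2\leq\sigma^2$, the monotonicity would instead have to be extracted from a direct computation of the entropy production $-Tr[\dot\rho_{mkt}\ln\rho_{mkt}]$. Finiteness of $N$ is essential throughout, since in infinite dimensions $\tfrac1N\mathbb{I}$ is not a state and the reference point for the whole argument disappears.
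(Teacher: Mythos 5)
Your proposal is correct and follows essentially the same route as the paper: both reduce the claim to monotonicity of the relative entropy $S(\rho_{mkt}(t)\,\|\,\tfrac1N\mathbb{I})$ with respect to the stationary maximally mixed state, the only difference being that the paper derives this monotonicity explicitly from the dilation $V(t)\rho=Tr_{env}[U\rho\otimes\rho_{env}U^{\dagger}]$ (unitary invariance plus contraction under partial trace) whereas you cite the Spohn--Lindblad H-theorem, which is the same fact packaged differently. Your explicit verification that the generator is unital (and your flagging of the inhomogeneous term $-Tr_{env}[H_I(t),\rho^I(0)]$) is a point of care the paper's proof passes over silently, and is a worthwhile addition rather than a divergence.
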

\begin{proof}
First note that it follows from the definition of the relative entropy (see for example \cite{BP} section 2.3.2) that we have:
\begin{align}\label{rel_ent_ineq}
S(\rho_{mkt}(t)||\rho_{0,mkt})&\leq S(\rho||\rho_0)
\end{align}
Where $\rho_0$ is a stationary reduced density matrix acting on $\mathcal{H}_{mkt}$, for the dynamics described by \ref{BMA}. We assume $V(t)$ is given by:
\begin{align*}
V(t)\rho_{mkt}(0)&=Tr_{env}[U(t,0)\rho_{mkt}(0)\otimes\rho_{env}U(t,0)^{\dagger}]
\end{align*}
Where $U(t,0)$ is an arbitrary unitary map. From \ref{rel_ent_ineq}, it follows (see for example \cite{BP} section 3.2.5):
\begin{align*}
S\big(V(t)\rho(0)||V(t)\rho_0\big)&=S\big(Tr_{env}[U(t,0)\rho(0)U(t,0)^{\dagger}]||Tr_{env}[U(t,0)\rho_0U(t,0)^{\dagger}]\big)\\
&\leq S\big(U(t,0)\rho(0)U(t,0)^{\dagger}||U(t,0)\rho_0U(t,0)^{\dagger}\big)\\
&=S\big(\rho(0)||\rho_0\big)
\end{align*}
Since, by assumption, $\rho_0$ is a stationary states, it follows that:
\begin{align*}
S\big(V(t)\rho(0)||V(t)\rho_0\big)&=S\big(V(t)\rho(0)||\rho_0\big)
\end{align*}
So:
\begin{align}\label{ent_proof}
S(\rho_{mkt}(t)||\rho_0)&\leq S(\rho_{mkt}(0)||\rho_0)
\end{align}
Therefore:
\begin{align}\label{ent_key_eq}
-S(\rho_{mkt}(t))-Tr[\rho_{mkt}(t)\log(\rho_0)]&\leq -S(\rho_{mkt}(0))-Tr[\rho_{mkt}(0)\log(\rho_0)]\nonumber\\
S(\rho_{mkt}(0))-S(\rho_{mkt}(t))&\leq Tr[\rho_{mkt}(t)\log(\rho_0)]-Tr[\rho_{mkt}(0)\log(\rho_0)]\nonumber\\
&=Tr[(\rho_{mkt}(t)-\rho_{mkt}(0))\log(\rho_0)]
\end{align}
We can select $\rho_0=\frac{1}{N}\mathbb{I}$ as the stationary state, whereby \ref{ent_key_eq} becomes:
\begin{align*}
S(\rho_{mkt}(0))-S(\rho_{mkt}(t))&\leq 0
\end{align*}
\end{proof}
\subsection{Classical Systems:}\label{class_syst}
A classical model in finite dimensions generally consists of a set of outcomes: $\{ x_i:i\in 1,\dots,N\}$, and an associated set of (time dependent) probabilities: $\{ p_i(t):i\in 1,\dots,N\}$. These can be represented as a diagonal density matrix acting on a diagonal observable:
\begin{definition}\label{define_class_diff}
A classical system is defined as an observable and density matrix that are simultaneously diagonalizable:
\begin{align*}
X&=\begin{bmatrix}x_{1} & & \\ & \ddots & \\ & & x_{N}\end{bmatrix}\text{, }\rho_{mkt}(t)=\begin{bmatrix}p_{1}(t) & & \\ & \ddots & \\ & & p_{N}(t)\end{bmatrix}
\end{align*}
So that:
\begin{align*}
E[X]&=\sum_{i=1}^Nx_ip_i(t)
\end{align*}
Working in the Schr{\"o}dinger interpretation, a classical diffusion is defined as a time evolution of the system represented by $\rho_{mkt}(t)$ and $X$, such that the system is a classical system at all times.
\end{definition}
In \cite{HicksPHD} proposition 5.4.13, it is shown that if we assume:
\begin{itemize}
\item the environment remains in a time independent maximum entropy state. That is:
\begin{align}\label{diag_env}
\rho(t)&=\rho_{mkt}(t)\otimes\frac{1}{K}\sum_{i=1}^K|e_i\rangle\langle e_i|\otimes\rho_B
\end{align}
\item the market density matrix begins in a diagonal state:
\begin{align}\label{diag_rho_0}
\rho_{mkt}(0)&=\sum_{i=1}^Np_i(0)|f_i\rangle\langle f_i|
\end{align}
\item $-Tr_{env}[H_I(t),\rho^I(0)]=0$.
\end{itemize}
We find that in equation \ref{BMA} we have: $\nu_u^2=\nu_d^2=0$, and the time evolution given by equation \ref{BMA} simplifies to:
\begin{align}\label{BMA_class}
\frac{d\rho_{mkt}(t)}{dt}&=+\sigma^2\Big(A_u\rho_{mkt}(t)A_d+A_d\rho_{mkt}(t)A_u-\frac{1}{2}\{A_uA_d+A_dA_u,\rho_{mkt}(t)\}\Big)
\end{align}
In this instance, we can write (\cite{HicksPHD} proposition 5.4.13):
\begin{align}\label{diagBMA}
\frac{d\rho_{mkt}(t)}{dt}&=\sigma^2\sum_{i=1}^N\Big(p_{i+1}(t)+p_{i-1}(t)-2p_i(t)\Big)|f_i\rangle\langle f_i|\\
&+\sigma^2\Big((p_2(t)-p_1(t)|f_1\rangle\langle f_1|+(p_{N-1}(t)-p_N(t))|f_N\rangle\langle f_N|\Big)\nonumber
\end{align}
We incorporate two important properties of this classical system in the following proposition:
\begin{proposition}\label{class ent}
The Von-Neummann entropy for the market state \ref{diag_rho_0}, under the dynamics described by \ref{diagBMA}, increases monotonically, tending to the limit:
\begin{align*}
\lim_{t\rightarrow\infty} Tr[\rho_{mkt}(t)\log(\rho_{mkt}(t))]&= \log(N)
\end{align*}
\end{proposition}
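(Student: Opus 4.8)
The plan is to split the statement into two independent pieces: monotonicity of the entropy, and identification of the limiting value. For the monotonicity, I would note that the dynamics \ref{diagBMA} is exactly the restriction of the general Markovian generator \ref{BMA} to diagonal states with $\nu_u^2=\nu_d^2=0$, so that Proposition \ref{gen entropy prop} applies directly and yields that the Von-Neumann entropy $S(\rho_{mkt}(t))=-Tr[\rho_{mkt}(t)\log\rho_{mkt}(t)]$ is non-decreasing. Since the right-hand side of \ref{diagBMA} is diagonal whenever $\rho_{mkt}$ is diagonal, the state started from \ref{diag_rho_0} stays diagonal for all $t$, and its Von-Neumann entropy coincides with the Shannon entropy $H(p(t))=-\sum_i p_i(t)\log p_i(t)$ of the diagonal weights. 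This reduces the whole proposition to a statement about a classical weight vector.

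Next I would recast the limit as a linear ODE problem. Reading off the diagonal entries of \ref{diagBMA}, the weights obey the discrete heat equation with reflecting (Neumann) boundary conditions,
\begin{align*}
\dot p_i &= \sigma^2\big(p_{i+1}+p_{i-1}-2p_i\big)\quad(2\leq i\leq N-1),\\
\dot p_1 &= \sigma^2\big(p_2-p_1\big),\qquad \dot p_N = \sigma^2\big(p_{N-1}-p_N\big),
\end{align*}
which I would write compactly as $\dot p=\sigma^2 L\,p$, with $L$ the (negative) graph Laplacian of the path on $N$ vertices. Two features are immediate and central: $L$ is real symmetric, hence diagonalizable with real eigenvalues and an orthonormal eigenbasis, and its row sums vanish, so $\frac{d}{dt}\sum_i p_i=0$ and total probability is conserved along the flow.

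The core of the argument is then the spectral structure of $L$. I would show that $L$ is negative semidefinite with a simple zero eigenvalue whose eigenvector is the constant vector $\mathbf 1=(1,\dots,1)$, all other eigenvalues being strictly negative. Writing $p(t)=e^{\sigma^2 L t}p(0)$ and expanding $p(0)$ in the eigenbasis, every component outside the kernel decays exponentially, so $p(t)$ collapses onto $\mathrm{span}(\mathbf 1)$ as $t\to\infty$; the conserved normalization $\sum_i p_i=1$ then fixes the limit as $p_i(\infty)=1/N$, i.e. $\rho_{mkt}(t)\to\frac1N\mathbb I$. Continuity of the entropy on the (compact) set of density matrices gives $S(\rho_{mkt}(t))\to-\sum_i\frac1N\log\frac1N=\log N$, which is the content of the claimed maximum-entropy limit.

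I expect the main obstacle to be the spectral step, specifically establishing that the zero eigenvalue of $L$ is simple and that all remaining eigenvalues are strictly negative, since this is precisely what excludes any stationary point other than the uniform state and upgrades boundedness to genuine convergence. Simplicity is where connectedness of the path graph enters: $Lv=0$ forces $v_{i+1}=v_i$ across every adjacent pair, so $v$ must be constant and the kernel is one-dimensional. An alternative route that avoids explicit diagonalization is to use $H(p)$ itself as a Lyapunov function: a direct computation using conservation of $\sum_i p_i$ gives $\frac{dH}{dt}=\frac{\sigma^2}{2}\sum_{i\sim j}(p_i-p_j)\big(\log p_i-\log p_j\big)\geq0$, with equality only when $p$ is constant, so LaSalle's invariance principle forces convergence to the unique uniform equilibrium. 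Either way, the spectral gap of $L$ also governs the rate of approach, which is the quantity the subsequent sections exploit when contrasting classical and non-classical diffusion.
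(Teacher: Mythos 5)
Your proposal is correct, and on the convergence half of the statement it is genuinely more complete than the paper's own argument. For monotonicity you do exactly what the paper does: invoke Proposition \ref{gen entropy prop}, after observing that diagonality is preserved under \ref{diagBMA} so that the Von-Neumann entropy reduces to the Shannon entropy of the weights $p_i(t)$. Where you diverge is on the limit. The paper's proof stops at identifying $\rho_0=\frac{1}{N}\mathbb{I}$ as the unique stationary diagonal state and leaves the actual convergence $\rho_{mkt}(t)\rightarrow\rho_0$ implicit (it is only really supplied later by the stable-attractor argument of Proposition \ref{stat_pts}); uniqueness of a fixed point does not by itself yield convergence. You close that gap explicitly: your reading of \ref{diagBMA} as $\dot p=\sigma^2 Lp$ with $L$ the negative path-graph Laplacian is the correct one (including the Neumann boundary rows $\dot p_1=\sigma^2(p_2-p_1)$ and $\dot p_N=\sigma^2(p_{N-1}-p_N)$, which do follow from the $A_u,A_d$ of equation \ref{Aud}), and symmetry, negative semidefiniteness, and simplicity of the zero eigenvalue (via connectedness of the path) give exponential collapse onto the uniform vector, with the conserved normalization fixing the limit and continuity of the entropy delivering $\log N$ --- note in passing that your computation also silently corrects the sign in the statement, which should read $-Tr[\rho_{mkt}(t)\log\rho_{mkt}(t)]\rightarrow\log N$. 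Your Lyapunov/LaSalle alternative is likewise sound and amounts to an $H$-theorem for the birth--death chain. The spectral route additionally buys you a rate of convergence (the spectral gap), which is precisely the quantity the paper's numerical section exploits when contrasting classical and non-classical diffusion; the paper's route buys brevity at the cost of leaving the convergence step unproved at this point in the text.
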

\begin{proof}
First note that for a classical system where the initial market state is represented by a diagonal reduced density matrix (ie \ref{diag_rho_0}) and evolving according to equation \ref{diagBMA}, the reduced density matrix remains diagonal:
\begin{align*}
\rho_{mkt}(t)&=\sum_{i=1}^Np_i(t)|f_i\rangle\langle f_i|
\end{align*}
Where $p_i(t)$ can be found by solving equation \ref{diagBMA}. See also \cite{HicksPHD} 5.4.13. Next note that, where $\mathcal{L}(t)$ is the dynamical map described by \ref{diagBMA}:
\begin{align*}
L(t)\rho_{mkt}(t)&=0\implies p_i(t)=p_j(t)\text{, for: }i\neq j
\end{align*}
Thus we have that:
\begin{align*}
\rho_0&=\frac{1}{N}\sum_{i=1}^N|f_i\rangle\langle f_i|
\end{align*}
is the only stationary `classical' state. Thus the first statement then follows from proposition \ref{gen entropy prop}.
\end{proof}
\subsection{Characterizing Diffusion Orbits:}\label{sec_diff_orb}
We see from equations \ref{BMA} and \ref{constants} that because: $r_{lm}=0$ for $l\neq m$, we have $\nu_u^2=\nu_d^2=0$ and the terms:
\begin{align*}
\nu_u^2\Big(A_u\rho_{mkt}(t)A_u-\frac{1}{2}\{A_uA_u,\rho_{mkt}(t)\}\Big)+\nu_d^2\Big(A_d\rho_{mkt}(t)A_d-\frac{1}{2}\{A_dA_d,\rho_{mkt}(t)\}\Big)
\end{align*}
do not contribute to the time evolution. As a result, we can think of equation \ref{BMA_class} as shifting probability along matrix diagonals, and a classical density matrix will remain classical at all times. For this reason, we term this {\em classical} diffusion.

In the next two propositions, we characterize diffusion orbits (proposition \ref{diff_orb}) and stationary points (proposition \ref{stat_pts}) that arise from the time evolution given by equation \ref{BMA}, in the case that $\nu_u, \nu_d\neq 0$.
\begin{proposition}\label{diff_orb}
Let $\rho(t)$ be a density matrix acting on a separable Hilbert space $\mathcal{H}$, and let the time evolution of this density matrix be given by equation \ref{BMA}. First, we define the orbits $\mathcal{D}^{\overline{\epsilon}}(\mathcal{H})$:
\begin{align}\label{orb}
\rho_{mkt}(t)\in\mathcal{D}^{\overline{\epsilon}}(\mathcal{H})&\Leftrightarrow\sum_{i=1}^{N-j}\rho_{i(i+j)}(t)=\epsilon_j\text{, }\sum_{i=1}^{N-j}\rho_{(i+j)i}(t)=\epsilon_j^{\ast}
\end{align}
Then we have: 
\begin{align}\label{orb3}
\rho(t)\in\mathcal{D}^{\overline{\epsilon}}(\mathcal{H})&\implies \rho(u)\in\mathcal{D}^{\overline{\epsilon}}(\mathcal{H})\text{, for all }u\geq t
\end{align}
That is:
\begin{align*}
\frac{d}{dt}\bigg(\sum_i\rho_{i(i+j)}(t)\bigg)&=0
\end{align*}
Where, $N$ is the Hilbert space (possibly infinite) dimension.
\end{proposition}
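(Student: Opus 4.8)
The plan is to reinterpret each diagonal sum as the expectation value of a shift operator and then show that the generator in \ref{BMA} annihilates these observables. Writing $(A_u)^j=\sum_{i}|f_{i+j}\rangle\langle f_i|$, a direct computation in the chosen basis gives the identity
\begin{align*}
\sum_{i=1}^{N-j}\rho_{i(i+j)}(t)&=Tr[\rho_{mkt}(t)(A_u)^j]
\end{align*}
and, analogously, $\sum_i\rho_{(i+j)i}(t)=Tr[\rho_{mkt}(t)(A_d)^j]$. Hence it suffices to prove $\frac{d}{dt}Tr[\rho_{mkt}(t)(A_u)^j]=0$, and the conjugate statement in \ref{orb} then follows by the same argument with $A_u$ replaced by $A_d=A_u^\dagger$.

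First I would differentiate and substitute \ref{BMA}, so that $\frac{d}{dt}Tr[\rho_{mkt}(t)(A_u)^j]=Tr[\dot\rho_{mkt}(t)(A_u)^j]$ splits by linearity into one contribution per term on the right-hand side. The central tool is that, in the regime $N\rightarrow\infty$ where boundary effects are discarded (as in Proposition \ref{nonclass_prop}), the operators $A_u$ and $A_d$ act as mutually inverse shifts, $A_uA_d=A_dA_u=\mathbb{I}$, so that under the trace $(A_u)^j$ can be slid through every $A_\alpha$ factor. Applying cyclicity of the trace to the $\sigma^2$ dissipator, both $Tr[A_u\rho A_d(A_u)^j]$ and $Tr[A_d\rho A_u(A_u)^j]$ collapse to $Tr[\rho(A_u)^j]$, while the anticommutator piece contributes $-2Tr[\rho(A_u)^j]$, so the three terms cancel. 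The same mechanism sends the conjugation and anticommutator parts of the $\nu_u^2$ term to $Tr[\rho(A_u)^{j+2}]$ and $-Tr[\rho(A_u)^{j+2}]$, and those of the $\nu_d^2$ term to $Tr[\rho(A_u)^{j-2}]$ and $-Tr[\rho(A_u)^{j-2}]$; both cancel in pairs. Thus every dissipative term contributes zero.

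It remains to treat the coherent term $-Tr_{env}[H_I(t),\rho^I(0)]$. Under the standing zero-drift assumption the free market Hamiltonian is trivial, so in the interaction picture only the environment operators carry time dependence; using the product structure $\rho^I(0)=\rho_{mkt}(0)\otimes\rho_{env}$ together with $H_I=\sqrt{\kappa\gamma}\sum_\alpha A_\alpha\otimes B_\alpha$, the partial trace reduces this term to $-\sqrt{\kappa\gamma}\sum_\alpha b_\alpha(t)[A_\alpha,\rho_{mkt}(0)]$ with $b_\alpha(t)=Tr[B_\alpha(t)\rho_{env}]$. Its contribution to the diagonal sum is $-\sqrt{\kappa\gamma}\sum_\alpha b_\alpha(t)Tr[[A_\alpha,\rho_{mkt}(0)](A_u)^j]$, and since $(A_u)^j$ commutes with $A_u$ and, in the limit, with $A_d$, cyclicity forces each $Tr[[A_\alpha,\rho_{mkt}(0)](A_u)^j]=0$ irrespective of the values of $b_\alpha(t)$. (When the environment sits in the diagonal state \ref{diag_env} one in fact has $b_u=b_d=0$ and the term vanishes outright.) Collecting the four vanishing contributions yields $\frac{d}{dt}Tr[\rho_{mkt}(t)(A_u)^j]=0$, which is \ref{orb3}.

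The hard part will be the careful justification of the limiting identities $A_uA_d=A_dA_u=\mathbb{I}$ and the index shifts they encode: for finite $N$ these hold only up to the rank-one corrections $\mathbb{I}-A_uA_d=|f_1\rangle\langle f_1|$ and $\mathbb{I}-A_dA_u=|f_N\rangle\langle f_N|$, so one must argue that on a separable Hilbert space, with a trace-class density matrix whose diagonal sums converge absolutely, these boundary contributions vanish in the $N\rightarrow\infty$ limit. An equivalent and more explicit route that sidesteps the operator algebra is to verify directly that the discrete-Laplacian map $\mathcal{L}$ of Proposition \ref{nonclass_prop} satisfies $\sum_i\mathcal{L}(M)_{i(i+j)}=0$ for every admissible $M$ by reindexing the two shifted sums onto the unshifted one; this is the same cancellation phrased at the level of matrix entries.
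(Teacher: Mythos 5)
Your proposal is correct in substance but takes a genuinely different route from the paper's own proof. The paper argues entirely at the level of matrix entries: it isolates a single element $\rho_{kl}|f_k\rangle\langle f_l|$, lists the three entries that each of $\mathcal{L}(\rho)$, $\mathcal{L}(A_u\rho A_u)$ and $\mathcal{L}(A_d\rho A_d)$ deposits on a given off-diagonal, notes that they contribute $\rho_{kl}+\rho_{kl}-2\rho_{kl}=0$ to the corresponding sum $\Sigma_j$, and then sums over elements --- in other words, exactly the reindexing argument you relegate to your final sentence as the ``more explicit route.'' Your primary argument instead identifies the conserved quantities as $Tr[\rho_{mkt}(t)(A_u)^j]$ and kills each dissipator by cyclicity of the trace together with $A_uA_d=A_dA_u=\mathbb{I}$ in the boundary-free regime. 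Your version buys conceptual transparency (the $\Sigma_j$ are expectations of shift operators that are fixed points of the Heisenberg-picture dual dynamics, which connects naturally to proposition \ref{KBE_prop}), it generalizes to any generator whose jump operators commute with $(A_u)^j$ under the trace, and --- unlike the paper --- it explicitly disposes of the driving term $-Tr_{env}[H_I(t),\rho^I(0)]$, which the paper's proof silently omits. The paper's version buys economy: it uses nothing beyond the definition of $\mathcal{L}$ from proposition \ref{nonclass_prop}, so no additional operator identities need justification. Both arguments share the same caveat at finite $N$: since $A_uA_d=\mathbb{I}-|f_1\rangle\langle f_1|$ and $A_dA_u=\mathbb{I}-|f_N\rangle\langle f_N|$, the off-diagonal sums are conserved only up to boundary corrections, and the statement as written (finite sums to $N-j$) is strictly true only in the $N\rightarrow\infty$ limit; you flag this honestly as the remaining work, whereas the paper does not address it at all.
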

\begin{proof}
We label the off-diagonal sums as follows:
\begin{align*}
\Sigma_j(t)&=\sum_i\rho_{i(i+j)}(t)
\end{align*}
Now consider a single matrix element:
\begin{align*}
\rho&=\rho_{kl}|f_k\rangle\langle f_l|
\end{align*}
The impact to $\Sigma_j(t)$ after a discrete time-step $\delta t$ from the term: $\mathcal{L}(A_d\rho A_d)$ is given by:
\begin{align*}
\mathcal{L}(A_d\rho A_d)&_{k(l+2)}=\rho_{kl}\\
\mathcal{L}(A_d\rho A_d)&_{(k-2)l}=\rho_{kl}\\
\mathcal{L}(A_d\rho A_d)&_{(k-1)(l+1)}=-2\rho_{kl}
\end{align*}
Similarly, the impact from the term $\mathcal{L}(A_d\rho A_d)$ is given by:
\begin{align*}
\mathcal{L}(A_u\rho A_u)&_{k(l-2)}=\rho_{kl}\\
\mathcal{L}(A_u\rho A_u)&_{(k+2)l}=\rho_{kl}\\
\mathcal{L}(A_u\rho A_u)&_{(k+1)(l-1)}=-2\rho_{kl}
\end{align*}
Finally, the impact from the term $\mathcal{L}(\rho)$ is given by:
\begin{align*}
\mathcal{L}(\rho)&_{(k-1)(l-1)}=\rho_{kl}\\
\mathcal{L}(\rho)&_{(k+1)(l+1)}=\rho_{kl}\\
\mathcal{L}(\rho)&_{kl}=-2\rho_{kl}
\end{align*}
Now summing over $j=k-l$, $j=k+2-l$ and $j=k-2-l$, we get:
\begin{align*}
\Sigma_{k-l}(t+\delta t)-\Sigma_{k-1}(t)&=\rho_{kl}+\rho_{kl}-2\rho_{kl}=0\\
\Sigma_{k+2-l}(t+\delta t)-\Sigma_{k+2-l}(t)&=\rho_{kl}+\rho_{kl}-2\rho_{kl}=0\\
\Sigma_{k-2-l}(t+\delta t)-\Sigma_{k-2-l}(t)&=\rho_{kl}+\rho_{kl}-2\rho_{kl}=0
\end{align*}
The result then follows from summing over matrix elements $\rho_{kl}$.
\end{proof}
We now go on to classify the stationary points for the system, before showing how these can be used to understand the long term behaviour for time evolution. Specifically we show that the stationary points are stable attractors, in sense of definition \ref{stable_att}, and that each diffusion orbit has a unique limit point. First, we define a stable attractor as follows:
\begin{definition}\label{stable_att}
A stable attractor under a mapping $\phi$ is defined as a density matrix $\rho_0$ such that:
\begin{align*}
||\phi(\rho)-\rho_0||<||\rho-\rho_0||
\end{align*}
Where $||\text{...}||$ represents the Frobenius norm.
\end{definition}
\begin{proposition}\label{stat_pts}
Assume that the dynamics for market reduced density matrix acting on $\mathbb{C}^N$ (where $N<\infty$), are described by equation \ref{BMA}. Then the stationary points are given by the Toeplitz matrices $T^{\overline{\epsilon}}$:
\begin{align}\label{Toeplitz}
T^{\overline{\epsilon}}&=\frac{1}{N}\mathbb{I}+\sum_{j=1}^{N-1}\sum_{i=1}^{N-j}\bigg(\frac{\epsilon_j}{N-j}\bigg)|f_{i+j}\rangle\langle f_i|+\bigg(\frac{\epsilon_j}{N-j}\bigg)^{\ast}|f_i\rangle\langle f_{i+j}|
\end{align}
These stationary points are stable attractors in the sense of definition \ref{stable_att}, and are the unique limit point within each diffusion orbit: $\mathcal{D}^{\overline{\epsilon}}(\mathcal{\mathbb{C}^N})$.
\end{proposition}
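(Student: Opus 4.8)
The plan is to treat all three claims — that the $T^{\overline{\epsilon}}$ are stationary, that they are the \emph{only} stationary points in their respective orbits, and that they are stable attractors — through a single Lyapunov argument built on the diagonal structure of the generator. First I would pass to ``diagonal coordinates'' $c_m(i):=\rho_{i,i+m}$ and rewrite the right hand side of \ref{BMA} (with the drift term $-Tr_{env}[H_I(t),\rho^I(0)]$ set to zero, as in the classical reduction) in these coordinates. A direct computation, entirely analogous to the one proving Proposition \ref{diff_orb}, shows that the $\sigma^2$ term acts as a discrete second difference (a reflecting-boundary Laplacian $\Delta$) \emph{along} the diagonal $m$, while the $\nu_u^2$ and $\nu_d^2$ terms act as shifted second differences feeding the diagonals $m+2$ and $m-2$ into $m$. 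Writing $\mathcal{G}$ for the generator, this reads schematically
\begin{align*}
\dot c_m(i)&=\sigma^2\,\Delta c_m(i)-\frac{\nu_u^2}{2}\,\Delta c_{m+2}(i-1)-\frac{\nu_d^2}{2}\,\Delta c_{m-2}(i+1).
\end{align*}

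The stationarity claim is then almost immediate: the matrix $T^{\overline{\epsilon}}$ of \ref{Toeplitz} has \emph{constant} diagonals, $c_m(i)\equiv\epsilon_m/(N-m)$, so every second difference above — interior terms \emph{and} the reflecting boundary corrections — vanishes, giving $\mathcal{G}(T^{\overline{\epsilon}})=0$. One checks in passing that $T^{\overline{\epsilon}}$ is Hermitian and unit-trace, and that its diagonal sums are exactly $\epsilon_j$, so $T^{\overline{\epsilon}}\in\mathcal{D}^{\overline{\epsilon}}(\mathbb{C}^N)$ by \ref{orb}.

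For the attractor and uniqueness claims I would use $V(\rho):=\|\rho-T^{\overline{\epsilon}}\|^2$ as a Lyapunov function on the orbit. Setting $E:=\rho-T^{\overline{\epsilon}}$, which by Proposition \ref{diff_orb} stays for all time in the subspace $\mathcal{Z}$ of matrices with vanishing diagonal sums, and using linearity with $\mathcal{G}(T^{\overline{\epsilon}})=0$, gives $\dot V=2\,\mathrm{Re}\,Tr[\mathcal{G}(E)E^{\dagger}]$. The classical part contributes $-2\sigma^2\sum_m\|\nabla c_m\|^2\le 0$, with equality only when every diagonal of $E$ is constant; since a constant diagonal with zero sum must be the zero diagonal, intersecting with $\mathcal{Z}$ forces $E=0$. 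If one can show $\dot V<0$ for every nonzero $E\in\mathcal{Z}$, then simultaneously: (i) $T^{\overline{\epsilon}}$ is a strict contraction in the sense of Definition \ref{stable_att} for the time-$\delta t$ flow; (ii) it is the unique stationary point in its orbit, since any other would give $\dot V=0$ with $E\ne 0$; and (iii) by a standard Lyapunov/compactness argument on the (closed, bounded) orbit the trajectory converges to it, making it the unique limit point. Positivity of $T^{\overline{\epsilon}}$ is then automatic, as it is the limit of a trajectory of genuine density matrices.

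The main obstacle is precisely the sign of $\dot V$. Unlike the purely classical dynamics of \ref{diagBMA}, the generator here is \emph{not} self-adjoint with respect to the Frobenius inner product once $\nu_u\ne\nu_d$, and the $\nu_u^2,\nu_d^2$ terms couple the diagonals $m$ and $m\pm 2$, producing indefinite cross-terms of the form $-\mathrm{Re}\langle\nabla c_m,\nabla c_{m\pm 2}\rangle$ in $\dot V$. The crux is to dominate these cross-terms by the manifestly negative gradient terms of the classical part. I expect this to follow from Cauchy--Schwarz together with the complete-positivity bound $\nu_u^2+\nu_d^2\le\sigma^2$ established above — in effect a diagonal-dominance (Gershgorin-type) estimate on the coupled system of Laplacians, which is exactly the regime in which \ref{BMA} generates a legitimate completely positive semigroup. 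Carrying the finite-$N$ reflecting boundary corrections consistently through all three terms is the fiddly bookkeeping the argument must get right, but it does not affect the structure of the estimate.
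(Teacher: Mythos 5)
Your reduction of the stationarity claim to the constancy of the diagonals of $T^{\overline{\epsilon}}$ is correct and is essentially what the paper does (it simply observes $\mathcal{L}(T^{\overline{\epsilon}})=\mathcal{L}(A_uT^{\overline{\epsilon}}A_u)=\mathcal{L}(A_dT^{\overline{\epsilon}}A_d)=0$). For the other two claims, however, your argument stops exactly where the work is. The entire content of the attractor and uniqueness statements is the strict inequality $\dot V<0$ for every nonzero $E$ in the zero-diagonal-sum subspace, and you offer only the \emph{expectation} that the indefinite cross-terms $-\mathrm{Re}\langle\nabla c_m,\nabla c_{m\pm2}\rangle$ can be dominated using Cauchy--Schwarz and $\nu_u^2+\nu_d^2\le\sigma^2$. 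The formal estimate is plausible --- pairing the cross-terms via $ab\le\tfrac12(a^2+b^2)$ would give something like $\dot V\le-(2\sigma^2-\nu_u^2-\nu_d^2)\sum_m\|\nabla c_m\|^2$ --- but it is not automatic: the summation-by-parts identity $\langle\Delta u,v\rangle=-\langle\nabla u,\nabla v\rangle$ holds cleanly only with matching boundary conventions, and the $\nu$-terms involve shifted, truncated diagonals whose boundary contributions in the finite-$N$ equation \ref{BMA} (as opposed to the $N\to\infty$ form of Proposition \ref{nonclass_prop}) do not obviously cancel. Until those boundary terms are shown to have the right sign, the Lyapunov derivative could in principle fail to be strictly negative for perturbations concentrated near the edge of the matrix, and with it the contraction property, the uniqueness of the stationary point within the orbit, and the convergence claim all remain unproved. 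You correctly name this as ``the crux,'' but naming the crux is not resolving it.

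For comparison, the paper's route to stability is more elementary: it perturbs $T^{\overline{\epsilon}}$ by $M_{\delta}$, uses linearity to reduce the attractor property to $\|\phi(M_{\delta})\|<\|M_{\delta}\|$ for the one-step map $\phi$, then estimates $\|\phi(M_{ij}|f_i\rangle\langle f_j|)\|^2=1-4\sigma^2\delta t+O(\delta t^2)$ element by element and sums with the triangle inequality, deducing uniqueness of the limit point from uniqueness of the stationary point in the orbit. Your Lyapunov formulation is structurally cleaner --- it would deliver monotone decay of the distance to $T^{\overline{\epsilon}}$ directly and correctly identifies the complete-positivity bound $\nu_u^2+\nu_d^2\le\sigma^2$ as the operative hypothesis --- but as submitted it establishes only the stationarity claim; the decisive negativity estimate, with its finite-$N$ boundary bookkeeping, still has to be carried out.
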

\begin{proof}
That the matrices: $T^{\overline{\epsilon}}$ are stationary points follows from the fact that:
\begin{align*}
\mathcal{L}(T^{\overline{\epsilon}})&=\mathcal{L}(A_uT^{\overline{\epsilon}}A_u)\\
&=\mathcal{L}(A_dT^{\overline{\epsilon}}A_d)\\
&=0
\end{align*}
Now we consider a perturbation: $M_{\delta}$, to the Toeplitz matrix, and look at the Frobenius norm, after applying the map:
\begin{align*}
\phi(\rho;\delta t)&=\rho+\sigma^2\mathcal{L}(\rho)\delta t-\nu_u^2\mathcal{L}(A_u\rho A_u)\delta t-\nu_d^2\mathcal{L}(A_d\rho A_d)\delta t
\end{align*}
The fact that $T^{\overline{\epsilon}}$ is a stable attractor follows if we can show that:
\begin{align*}
||\phi(T^{\overline{\epsilon}}+M_{\delta})-T^{\overline{\epsilon}}||<||M_{\delta}||
\end{align*}
Now, by linearity we have:
\begin{align*}
||\phi(T^{\overline{\epsilon}}+M_{\delta})-T^{\overline{\epsilon}}||&=||\phi(T^{\overline{\epsilon}})+\phi(M_{\delta})-T^{\overline{\epsilon}}||\\
&=||\phi(M_{\delta})||
\end{align*}
So we must show that:
\begin{align*}
||\phi(M_{\delta})||&<||M_{\delta}||
\end{align*}
We write:
\begin{align*}
M_{\delta}&=\sum_{i,j}M_{ij}|f_i\rangle\langle f_j|
\end{align*}
Then:
\begin{align*}
\phi(M_{\delta})&=\phi\Big(\sum_{i,j}M_{ij}|f_i\rangle\langle f_j|\Big)\\
&=\sum_{i,j}\phi(M_{ij}|f_i\rangle\langle f_j|)\\
\end{align*}
So, by the triangle inequality:
\begin{align*}
||\phi(M_{\delta})||&\leq\sum_{i,j}||\phi(M_{ij}|f_i\rangle\langle f_i|)||
\end{align*}
Therefore, the result follows from showing: $||\phi(M_{ij}|f_i\rangle\langle f_i|)||\leq M_{ij}$
Furthermore, we have:
\begin{align*}
||\phi(M_{ij}|f_i\rangle\langle f_i|)||^2&=1-4\sigma^2\delta t+O(\delta t^2)\\
\leq M_{ij}^2
\end{align*}
Therefore $||\phi(T^{\overline{\epsilon}}+M_{\delta})-T^{\overline{\epsilon}}||<||M_{\delta}||$ as required. Since $T^{\overline{\epsilon}}$ is the only stationary point in the orbit: $\mathcal{D}^{\overline{\epsilon}}(\mathbb{C}^N)$, it follows that $T^{\overline{\epsilon}}$ is the unique limit point within this diffusion orbit.
\end{proof}
\subsection{Classical vs Non-Classical Diffusion:}\label{ClassicalVsNonclassical}
It is well known that the general form for a completely positive and trace preserving quantum Markov semigroup for a density matrix acting on a separable Hilbert space can be written in the form (see for example \cite{Chrus} Theorem 2, originally \cite{Lindblad} Theorem 2):
\begin{align}\label{GKSL}
\frac{d\rho(t)}{dt}&=-i[\hat{H},\rho(t)]+\sum_i\gamma_i\Big(L_i\rho(t)L_i^{\dagger}-\frac{1}{2}\{L_iL_i^{\dagger},\rho(t)\}\Big)
\end{align}
We now apply this standard form to determine when a time evolution can be thought of as classical vs non-classical. First we give the following definition:
\begin{definition}\label{define_nonclass_diff}
A time evolution is considered classical in the case whereby when applied to a classical system in the sense of definition \ref{define_class_diff}, the system remains classical.
\end{definition}
\begin{proposition}\label{Lindblad}
Let $\rho(t)$ be a density matrix acting on a finite dimensional Hilbert space $\mathcal{H}=\mathbb{C}^N$. We write:
\begin{align*}
\rho(t)&=\sum_i\rho_{ij}(t)|f_i\rangle\langle f_j|
\end{align*}
A quantum Markov semigroup acting on a density matrix given by equation \ref{GKSL} is a classical diffusion, per definition \ref{define_nonclass_diff}, if and only if the $L_i$ are orthogonal. Furthermore, the sets of matrix elements defined as follows:
\begin{align}\label{diff_orb2}
\mathcal{D}_j&=\sum_{i=1}^{N-j}\rho_{i(i+j)}(t)|f_i\rangle\langle f_{i+j}|\text{, }\mathcal{D}_j^*=\sum_{i=1}^{N-j}\rho_{(i+j)i}(t)|f_{i+j}\rangle\langle f_i|
\end{align}
are closed under the map defined by equation \ref{GKSL} if and only if $L_i$ are orthogonal matrices.
\end{proposition}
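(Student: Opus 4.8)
The plan is to argue entirely through the action of the generator \ref{GKSL} on matrix elements in the fixed basis $\{|f_i\rangle\}$, organised by the decomposition of $\mathbb{M}_N$ into the stripe subspaces $\mathcal{D}_j$ of \ref{diff_orb2}. Since the price operator $X$ has distinct eigenvalues, a classical system in the sense of definition \ref{define_class_diff} is precisely one whose density matrix lies in the diagonal stripe $j=0$, so ``classical diffusion'' is exactly the statement that the map preserves $\mathcal{D}_0$, and the two halves of the proposition become the $j=0$ case and the all-$j$ case of one invariance question. First I would dispose of the unitary part: for diagonal $\rho$ one has $[\hat H,\rho]_{ab}=\hat H_{ab}(p_b-p_a)$, so $-i[\hat H,\rho]$ respects the stripe decomposition for arbitrary input only if $\hat H$ is itself diagonal, in which case it commutes with every diagonal $\rho$ and contributes nothing; the real content lies in the dissipator.

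Next I would compute the dissipator's off-stripe response explicitly. Writing $L_p|f_k\rangle=\sum_a (L_p)_{ak}|f_a\rangle$, the jump term gives $\big(L_p|f_k\rangle\langle f_l|L_p^\dagger\big)_{ab}=(L_p)_{ak}\overline{(L_p)_{bl}}$, while the anticommutator $-\tfrac12\{L_pL_p^\dagger,\rho\}$ is supported only on rows or columns carrying a source index. The forward implication is then immediate: if each $L_p$ is orthogonal in the sense modelled on $A_u,A_d$ --- that is, $L_p$ carries every basis vector to a scalar multiple of a single basis vector, $L_p|f_k\rangle\propto|f_{\sigma_p(k)}\rangle$ with $\sigma_p$ a fixed translation --- then $L_p|f_k\rangle\langle f_l|L_p^\dagger\propto|f_{\sigma_p(k)}\rangle\langle f_{\sigma_p(l)}|$ displaces row and column indices equally, preserving the stripe index $l-k$, and $L_pL_p^\dagger$ is diagonal so the anticommutator does no harm. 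This yields ``orthogonal $\Rightarrow$ classical'' and ``orthogonal $\Rightarrow$ each $\mathcal{D}_j$ closed,'' with proposition \ref{diff_orb} recovered as the instance $L_p\in\{A_u,A_d\}$.

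For the converse I would read the vanishing of the off-stripe response as a constraint on the $L_p$. Feeding the populations $|f_m\rangle\langle f_m|$ into the generator and requiring the $(a,b)$ entry to vanish for all $a\neq b$ yields, for $a,b$ distinct from $m$, the identity $\sum_p\gamma_p (L_p)_{am}\overline{(L_p)_{bm}}=0$, i.e. the off-diagonal part of $\sum_p\gamma_p v_p^{(m)}(v_p^{(m)})^\dagger$ vanishes, where $v_p^{(m)}$ is the $m$-th column of $L_p$. I would then fix the gauge --- diagonalising the Kossakowski coefficient matrix, in the spirit of the GKS standard form already used in the excerpt, so that the $L_p$ become trace-orthonormal --- and show that in this canonical representation the columns $v_p^{(m)}$ are each supported on a single coordinate, so that each $L_p$ is a (generalised) permutation, hence orthogonal. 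The residual stripe statement then needs the extra input that the underlying permutation is a translation rather than an arbitrary relabelling, which I would extract by repeating the elementwise computation on a single off-diagonal unit $|f_i\rangle\langle f_{i+j}|$ and demanding its image remain on stripe $j$.

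The hard part will be the converse, for a specific structural reason: the operators $\{L_p\}$ in \ref{GKSL} are fixed only up to a unitary mixing (together with shifts by multiples of the identity absorbed into $\hat H$), and under such mixing a manifestly non-orthogonal family can become orthogonal --- for instance two columns $(1,1)$ and $(1,-1)$ with equal rates recombine into the monomial pair $(\sqrt2,0)$, $(0,\sqrt2)$. Consequently positivity of the rates $\gamma_p\geq0$ alone does not forbid cancellations across different $p$, and ``the $L_p$ are orthogonal'' must be understood as a property of the generator realised in the canonical gauge; pinning down that gauge and showing it always delivers basis-aligned operators is the delicate step. A secondary subtlety I would flag is that the diagonal-preserving (classical) condition only forces each canonical $L_p$ to be a permutation, whereas full closure of every $\mathcal{D}_j$ additionally forces that permutation to be a translation; the motivating operators $A_u,A_d$ satisfy the stronger condition, and it is this shift structure that the stripe half of the statement really encodes.
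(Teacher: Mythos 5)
Your forward direction is the paper's proof almost verbatim: take $L_p$ orthogonal, so that after relabelling and scaling $L_p|f_k\rangle=|f_{k+n}\rangle$, observe that the jump term $L_p|f_k\rangle\langle f_{k+l}|L_p^{\dagger}=|f_{k+n}\rangle\langle f_{k+l+n}|$ shifts row and column indices equally and hence preserves the stripe index, while $L_pL_p^{\dagger}$ is diagonal so the anticommutator does no damage. Where you genuinely depart from the paper is in the converse and in the bookkeeping around it. The paper's converse consists of exhibiting a single non-orthogonal $L_i$, applying its jump term to one stripe element, and noting the output leaves the stripe; it does not sum over the full dissipator, does not address the Hamiltonian term (which you correctly dispose of by showing $-i[\hat H,\rho]$ preserves $\mathcal{D}_0$ for all diagonal $\rho$ only if $\hat H$ is diagonal), and does not consider that off-stripe contributions from different $L_p$ could cancel. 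Your derivation of the constraint $\sum_p\gamma_p(L_p)_{am}\overline{(L_p)_{bm}}=0$ from feeding in populations, and your observation that ``the $L_i$ are orthogonal'' is not gauge-invariant under unitary mixing of the Lindblad operators (your $(1,1)$, $(1,-1)$ example is exactly the relevant one, since the paper's own corollary obtains its $L_1,L_2$ precisely by rotating $A_u,A_d$ to diagonalize the Kossakowski matrix), identify a real gap in the paper's argument rather than introducing one of your own. The same goes for your permutation-versus-translation point: the paper's phrase ``by relabelling\ldots we can write $L_i|f_k\rangle=|f_{k+n}\rangle$'' silently assumes the shift structure that the stripe statement actually requires, since the stripes $\mathcal{D}_j$ are defined relative to the fixed price basis and an arbitrary relabelling would change them.

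The one thing to be honest about is that your plan, like the paper's proof, does not actually close the converse: even in the trace-orthonormal canonical gauge there is residual unitary freedom among Lindblad operators with equal rates, so ``each canonical $L_p$ is basis-aligned'' is not automatic and the clean fix is to restate the condition as ``there exists a representation of the generator with orthogonal $L_i$'' (for the \emph{if}) and ``no representation has this property'' (for the \emph{only if}). You flag this as the delicate step rather than resolving it, which is a fair reflection of where the difficulty genuinely lies; the paper simply does not engage with it.
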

\begin{proof}
If $L_i$ is orthogonal, we have (for $\alpha_k\in\mathbb{R}$):
\begin{align*}
L_i|f_j\rangle&=\alpha_k|f_k\rangle\\
L_iL_i^{\dagger}&=\sum_k\alpha_k|f_k\rangle\langle f_k|
\end{align*}
for some orthonormal basis. Then, by relabelling, and scaling if necessary, we can write:
\begin{align*}
L_i|f_k\rangle&=|f_{k+n}\rangle
\end{align*}
We now consider the impact of equation \ref{GKSL} on an arbitrary matrix element $|f_k\rangle\langle f_{k+l}|\in\mathcal{D}_l$:
\begin{align*}
L_i|f_k\rangle\langle f_{k+l}|L_i^{\dagger}&=|f_{k+n}\rangle\langle f_{k+l+n}|\\
L_iL_i^{\dagger}|f_k\rangle\langle f_{k+l}|&=|f_k\rangle\langle f_{k+l}|L_iL_i^{\dagger}\\
&=|f_k\rangle\langle f_{k+l}|
\end{align*}
Thus it follows that each set of elements $\mathcal{D}_l$ is closed under the mapping defined by \ref{GKSL}. Now assume there is an operator $L_i$, that does not fulfil the orthogonality condition. Then we can write:
\begin{align*}
L_i|f_j\rangle&=\alpha_k|f_k\rangle+\alpha_l|f_l\rangle
\end{align*}
Where $\alpha_k,\alpha_l\neq 0$. Now we have:
\begin{align*}
L_i|f_k\rangle\langle f_{k+l}|L_i^{\dagger}&=\alpha_k\alpha_l(|f_k\rangle\langle f_l|+|f_l\rangle\langle f_k|)+\alpha_k^2|f_k\rangle\langle f_k|+\alpha_l^2|f_l\rangle\langle f_l|
\end{align*}
Whereupon it follows that the sets: $\mathcal{D}_l$ are not closed.
\end{proof}
\begin{corollary}
The time evolution described by equation \ref{BMA} is a classical time evolution, according to definition \ref{define_nonclass_diff}, if and only if $\nu_u^2=\nu_d^2=0$.
\end{corollary}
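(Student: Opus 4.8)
The plan is to reduce the corollary to Proposition \ref{Lindblad}, which already characterises a classical diffusion by the orthogonality of its Lindblad operators (equivalently, by the closure of the off-diagonal bands $\mathcal{D}_j$ of \ref{diff_orb2}). Equation \ref{BMA} is already in GKS standard form with $F_1=A_u$, $F_2=A_d$ and the coefficient matrix specified in \ref{GKS_for_eq7}, so the whole question is whether the canonical Lindblad operators obtained from that coefficient matrix are orthogonal. Throughout I would invoke the standing zero-drift and maximum-entropy environment assumptions of Section \ref{class_syst}, under which the inhomogeneous term $-Tr_{env}[H_I(t),\rho^I(0)]$ vanishes, so that only the dissipative part needs to be analysed.

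For the implication $\nu_u^2=\nu_d^2=0\Rightarrow$ classical, I would note that the coefficient matrix then collapses to $\sigma^2\mathbb{I}$, which is already diagonal, so the Lindblad operators are simply $A_u$ and $A_d$ (rescaled by $\sigma$). Since $A_u|f_k\rangle=|f_{k+1}\rangle$ and $A_d|f_k\rangle=|f_{k-1}\rangle$ each send a basis vector to a single basis vector, they are orthogonal in the sense required by Proposition \ref{Lindblad}, and classicality follows at once. Equivalently, this is precisely equation \ref{BMA_class}, which was already shown in Section \ref{class_syst} to map a diagonal state to a diagonal state.

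For the converse I would argue by contraposition using a direct computation, which is cleaner than trying to diagonalise the (non-Hermitian, when $\nu_u^2\neq\nu_d^2$) coefficient matrix and then track the representation freedom of the GKSL form. Assume $\nu_u^2\neq 0$ and feed a generic diagonal state $\rho=\sum_i p_i|f_i\rangle\langle f_i|$ into the form \ref{BMA_nonclass}. The term $\sigma^2\mathcal{L}(\rho)$ stays diagonal, while $A_u\rho A_u=\sum_i p_i|f_{i+1}\rangle\langle f_{i-1}|$ lives entirely on the band with row index minus column index equal to $2$, and $\mathcal{L}$ preserves this band; the conjugate term $A_d\rho A_d$ lives on the opposite band and contributes nothing there. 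A short calculation then gives
\begin{align*}
\Big(\frac{d\rho_{mkt}}{dt}\Big)_{(i+1)(i-1)}&=-\frac{\nu_u^2}{2}\big(p_{i-1}+p_{i+1}-2p_i\big).
\end{align*}
These entries cannot all vanish for a generic probability vector, so the diagonal state instantaneously develops coherences and the system leaves the classical manifold; by symmetry the same conclusion holds if instead $\nu_d^2\neq 0$, using the conjugate band. Hence a classical diffusion forces $\nu_u^2=\nu_d^2=0$.

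The main obstacle is this converse direction, and specifically ruling out accidental cancellation of the off-diagonal entries. Here I would observe that the $\nu_u^2$ and $\nu_d^2$ contributions sit on Hermitian-conjugate bands and therefore cannot cancel one another, so it suffices to examine a single band in isolation. It then remains to show that $p_{i-1}+p_{i+1}-2p_i=0$ for every $i$ forces the $p_i$ to be an arithmetic sequence, which on $\mathbb{C}^N$ with $N<\infty$ is incompatible with a nonconstant nonnegative normalised distribution; thus the off-diagonal entries genuinely survive for a typical classical state whenever the relevant $\nu$ is nonzero.
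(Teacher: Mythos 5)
Your proof is correct, but the converse direction takes a genuinely different route from the paper. The paper proves both directions by appeal to Proposition \ref{Lindblad}: it diagonalizes the coefficient matrix $\bigl(\begin{smallmatrix}\sigma^2&-\nu_u^2\\-\nu_d^2&\sigma^2\end{smallmatrix}\bigr)$ with a rotation to obtain Lindblad operators $L_1=\cos(\theta)A_u-\sin(\theta)A_d$, $L_2=\sin(\theta)A_u+\cos(\theta)A_d$, and then observes that these are orthogonal if and only if $\nu_u=\nu_d=0$. Your forward direction is essentially the same reduction (trivial rotation, $L_i$ proportional to $A_u,A_d$, which are orthogonal). For the converse you instead feed a diagonal state into the generator and compute the band entry $\bigl(\tfrac{d\rho}{dt}\bigr)_{(i+1)(i-1)}=-\tfrac{\nu_u^2}{2}(p_{i-1}+p_{i+1}-2p_i)$ directly; this computation is correct (the $\sigma^2\mathcal{L}(\rho)$ term stays diagonal and the $\nu_d^2$ term lives on the conjugate band, so no cancellation is possible). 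This buys you something real: you avoid both the non-symmetry of the coefficient matrix when $\nu_u^2\neq\nu_d^2$ (a rotation does not diagonalize it in general) and the representation freedom of the GKSL form, which the paper's ``only if'' direction glosses over. One small blemish: your closing claim that an arithmetic sequence is incompatible with a nonconstant nonnegative normalised distribution is false ($p_i\propto i$ is a counterexample), but this does not matter, since Definition \ref{define_nonclass_diff} requires classicality to be preserved for \emph{every} classical system, so exhibiting a single diagonal state with $p_{i-1}+p_{i+1}-2p_i\neq 0$ somewhere (e.g.\ a Dirac state) already completes the contrapositive.
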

\begin{proof}
We can write equation \ref{BMA} in the Lindblad form using:
\begin{align*}
L_1&=\cos(\theta)A_u-\sin(\theta)A_d\\
L_2&=\sin(\theta)A_u+\cos(\theta)A_d
\end{align*}
Where:
\begin{align*}
\begin{pmatrix}\cos(\theta)&-\sin(\theta)\\ \sin(\theta)&\cos(\theta)\end{pmatrix}\begin{pmatrix}\sigma^2&-\nu_u^2\\-\nu_d^2&\sigma^2\end{pmatrix}\begin{pmatrix}\cos(\theta)&+\sin(\theta)\\ -\sin(\theta)&\cos(\theta)\end{pmatrix}&=\begin{pmatrix}\sigma_1^2&0\\0&\sigma_2^2\end{pmatrix}
\end{align*}
For some $\sigma_1,\sigma_2$. Therefore, the result follows from the fact that $L_1$ and $L_2$ are orthogonal if and only if $\nu_u=\nu_d=0$. Indeed:
\begin{align*}
L_1L_1^T&=\cos^2(\theta)A_uA_d+\sin^2(\theta)A_dA_u-\sin(\theta)\cos(\theta)A_dA_d-\sin(\theta)\cos(\theta)A_uA_u
\end{align*}
\end{proof}
\section{Market Imprecision:}\label{sec_precis}
\subsection{Modelling Market Imprecision Using Non-Classical Observables:}\label{precision}
There are some instances where an asset price is known with a high degree of certainty in advance. Where this is not the case, we consider two possible reasons why not:
\begin{enumerate}
\item[a)] The purchase is to occur at some time in the future, and one does not know what will happen to impact the price between now and then.
\item[b)] There are imperfections in the market mechanism for executing a trade.
\end{enumerate}
\underline{Case a): Price Evolution via External Influence}\newline
\noindent The first case is well handled by the time evolution described by equation \ref{BMA}. Essentially, the interaction between the market, and the external environment is boiled down to:
\begin{itemize}
\item Good news event, risk appetite \& price increases: $A_u\otimes B_u$.
\item Bad new event, risk appetite \& price decreases: $A_d\otimes B_d$.
\end{itemize}
As illustrated in \cite{HicksPHD} and \cite{HicksOQS}, the possibility of such events leads to a probability distribution for the future price. This uncertainty is down to an increase in the market entropy: essentially case a) above.\newline

\noindent\underline{Case b): Imperfect Market Mechanism}\newline
Where one executes a small trade in a market with reasonable liquidity, one may be able to determine the execution price (to a close approximation) prior to execution. For example by looking up the current market quotes using one of the standard data aggregation platforms available (Bloomberg, Reuters etc). We can model such an eventuality using the market state:
\begin{align}\label{dirac}
\rho_{mkt}(0)&=|f_k\rangle\langle f_k|
\end{align}
Throughout this article, we assume the standard price operator is given by equation \ref{std_price}. Under the market state given by equation \ref{dirac}, this will return the price $x_k$ with certainty. We will now consider an operator representing a trade which for whatever reason, is slightly misaligned with the market. For example, one might consider a contract with non-standard terms, or a large trade in an illiquid stock. We can model this situation by changing the eigenvectors for the relevant price operator:
\begin{align}\label{NC_op}
X&=\sum_{i=1}^Nx_i|v_i\rangle\langle v_i|
\end{align}
There is now uncertainty in the measured outcome for the non-standard trade that results from an imperfect market mechanism, rather than from a lack of information. Even when the market is in a Dirac state, \ref{dirac}, we may still find non-zero variance. For example we set:
\begin{align}\label{v_vec}
|v_i\rangle&=\epsilon|f_{i-1}\rangle+\sqrt{1-2\epsilon^2}|f_j\rangle+\epsilon|f_{i+1}\rangle
\end{align}
Then we have, where $\rho_{mkt}$ is given by equation \ref{dirac}, the variance is given by:
\begin{align*}
Var(X)&=E[X^2]-x_k^2\\
&=Tr[\sum_ix_i^2|v_i\rangle\langle v_i|f_k\rangle\langle f_k]-x_k^2\\
&=\epsilon^2x_{k+1}^2+\epsilon^2x_{k-1}^2-2\epsilon^2x_k^2
\end{align*}
Where if: $x_{k+1}=x_k+\delta x$ and $x_{k-1}=x_k-\delta x$, we find that:
\begin{align*}
Var(X)&=2\epsilon^2\delta x^2
\end{align*}
\subsection{Kolmogorov Backward Equations:}
For many practical purposes one is interested in the evolution of the expected value for some function of the main price observable, as we look further into the future. Therefore, one way of investigating the impact of non-classical observables, such as that given in equation \ref{NC_op}, is by considering the time evolution of $u(t)=E^t[f(X)]$.
\begin{proposition}\label{KBE_prop}
We define $u(t,x)$ as follows:
\begin{align*}
u(t)&=E^t[f(X)]
\end{align*}
Then, under the time evolution given by equation \ref{BMA}, the evolution of $u(t)$ is given by:
\begin{align}\label{KBE}
\frac{du}{dt}&=Tr\Big[\rho_{mkt}(t)\sum_{k,l=u,d}c_{kl}A_kf(X)A_l-\frac{1}{2}\{A_kA_l,f(X)\}\Big]
\end{align}
Where: $c_{ud}=c_{du}=\sigma^2$, and $c_{uu}=\nu_u^2$, $c_{dd}=\nu_d^2$.
\end{proposition}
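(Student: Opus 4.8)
The plan is to work in the Heisenberg (dual) picture, converting the Schr{\"o}dinger-picture master equation \ref{BMA} into an evolution equation for the observable $f(X)$. First I would write the expectation as a trace, $u(t)=Tr[\rho_{mkt}(t)f(X)]$, and differentiate under the trace to get $\frac{du}{dt}=Tr\big[\frac{d\rho_{mkt}(t)}{dt}\,f(X)\big]$. Substituting the right-hand side of \ref{BMA} then reduces the problem to evaluating the trace of $f(X)$ against each dissipator term. At this stage I would invoke the standing zero-drift / Martingale assumption used throughout the paper to discard the inhomogeneous term $-Tr_{env}[H_I(t),\rho^I(0)]$; I would also note for completeness that even a genuine Hamiltonian contribution $-i[\hat{H},\rho]$ drops out only conditionally, since $Tr\big[[\hat{H},\rho]f(X)\big]=Tr\big[\rho[f(X),\hat{H}]\big]$ vanishes precisely when $[f(X),\hat{H}]=0$, so this term must be assumed absent.

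The core computation is the dualization of the Lindblad generator. Writing \ref{BMA} in the compact GKS form
\begin{align*}
\frac{d\rho_{mkt}(t)}{dt}&=\sum_{k,l=u,d}c_{kl}\Big(A_k\rho_{mkt}(t)A_l-\frac{1}{2}\{A_kA_l,\rho_{mkt}(t)\}\Big),
\end{align*}
with $c_{ud}=c_{du}=\sigma^2$, $c_{uu}=\nu_u^2$, $c_{dd}=\nu_d^2$ (which one checks reproduces \ref{BMA} term by term, the $\sigma^2$ anticommutator absorbing both orderings $A_uA_d$ and $A_dA_u$), I would repeatedly apply cyclicity of the trace. The gain term gives $Tr[A_k\rho_{mkt}A_lf(X)]=Tr[\rho_{mkt}\,A_lf(X)A_k]$, and the loss term gives $Tr[\{A_kA_l,\rho_{mkt}\}f(X)]=Tr[\rho_{mkt}\,\{A_kA_l,f(X)\}]$. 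Collecting these produces $\frac{du}{dt}=Tr\big[\rho_{mkt}(t)\sum_{k,l}c_{kl}\big(A_lf(X)A_k-\frac{1}{2}\{A_kA_l,f(X)\}\big)\big]$, i.e. the expectation of the dual generator applied to $f(X)$, which is the backward-equation structure we are after.

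The one genuine subtlety — and the step I expect to be the main obstacle if one is careless — is reconciling the operator ordering $A_lf(X)A_k$ that the dual produces with the ordering $A_kf(X)A_l$ asserted in \ref{KBE}. The resolution is that the coefficient matrix is symmetric, $c_{kl}=c_{lk}$ (indeed $c_{ud}=c_{du}=\sigma^2$ and the remaining entries sit on the diagonal), so relabelling the summation indices $k\leftrightarrow l$ in the gain term yields $\sum_{k,l}c_{kl}A_lf(X)A_k=\sum_{k,l}c_{kl}A_kf(X)A_l$. I would therefore state the symmetry of $c_{kl}$ explicitly as the hypothesis that legitimises the symmetric form of \ref{KBE}, after which the claim follows at once. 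As a final sanity check I would verify trace (total-probability) preservation by setting $f(X)=\mathbb{I}$: then each summand becomes $A_kA_l-\frac{1}{2}\{A_kA_l,\mathbb{I}\}=A_kA_l-A_kA_l=0$, so $\frac{du}{dt}=0$, consistent with $E^t[\mathbb{I}]=Tr[\rho_{mkt}(t)]=1$ being constant.
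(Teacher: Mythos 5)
Your proposal is correct and follows essentially the same route as the paper's own proof: differentiate $u(t)=Tr[\rho_{mkt}(t)f(X)]$ under the trace, substitute the generator from equation \ref{BMA}, and cycle operators inside the trace to pass to the Heisenberg picture. The extra care you take --- explicitly discarding the inhomogeneous term $-Tr_{env}[H_I(t),\rho^I(0)]$, and using the symmetry $c_{kl}=c_{lk}$ to turn the dualized ordering $A_lf(X)A_k$ into the stated $A_kf(X)A_l$ --- only fills in details that the paper compresses into the phrase ``by cycling the operators''.
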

\begin{proof}
Since we are working in the Schr{\"o}dinger interpretation, it follows that:
\begin{align*}
\frac{du}{dt}&=Tr\Big[\frac{d\rho_{mkt}(t)}{dt}f(X)\Big]\\
&=Tr\Big[\Big(\sum_{k,l=u,d}c_{kl}A_k\rho_{mkt}(t)A_l-\frac{1}{2}\{A_kA_l,\rho_{mkt}(t)\}\Big)f(X)\Big]
\end{align*}
By cycling the operators we end up with equation \ref{KBE}, whereby we are working in the Heisenberg interpretation. That it, the time evolution is applied to the observable $f(X)$, rather than to the market reduced density matrix: $\rho_{mkt}(t)$.
\end{proof}
\subsection{Type I Non-Classical Systems:}
Comparing equations \ref{BMA} and \ref{KBE}, it becomes clear that assuming the system starts as a classical system (per definition \ref{define_class_diff}), then even if $\nu_u,\nu_d\neq 0$ then one of these 2 will remain diagonal. That is:
\begin{itemize}
\item We can work in the Schr{\"o}dinger interpretation, and apply equation \ref{BMA}. The market reduced density matrix becomes non-diagonal if $\nu_u,\nu_d\neq 0$. $f(X)$ remains diagonal.
\item We can work in the Heisenberg interpretation, and apply equation \ref{KBE}. That is the market reduced density matrix remains diagonal, and the operator $f(X)$ evolves to become non-diagonal if $\nu_u,\nu_d\neq 0$.
\end{itemize}
Note that in either case, whilst the matrix elements: $\rho_{ij}(t)$ for $i\neq j$, help to determine the time evolution, they play no role in taking the expectation. If we have:
\begin{align*}
\rho_{mkt}(t)&=\sum_{i,j}\rho_{ij}(t)|f_i\rangle\langle f_j|\\
f(X)&=\sum_if(x_i)|f_i\rangle\langle f_i|
\end{align*}
Then:
\begin{align}\label{NC_typeI}
E^t[f(X)]&=\sum_i\rho_{ii}(t)f(x_i)
\end{align}
That is the diagonal elements of $\rho_{mkt}(t)$ can be interpreted as `classical' probabilities. Ie $\rho_{ii}(t)$ is the probability of measuring $x_i$ at time $t$. This leads to the following definition:
\begin{definition}\label{TypeIDef}
A type I non-classical system is defined as an observable: $X$, a density matrix: $\rho$, and a time evolution such as that defined by equation \ref{BMA}, such that at least one of the observable or the density matrix is diagonalized relative to the basis defined by the time evolution.
\end{definition}
Note that for a type I non-classical system, per definition \ref{TypeIDef}, we have (using the notation given by equation \ref{diff_orb2}):
\begin{itemize}
\item $\mathcal{D}_0$ are the set of probabilities for the eigenvalues of a diagonal operator of the type given by equation \ref{std_price}.
\item The elements in $\mathcal{D}_i$, for $i\neq 0$, do not directly impact the probabilities. However, they will impact the time-evolution of elements in $\mathcal{D}_0$.
\end{itemize}
\subsection{Type II Non-Classical Systems:}
\begin{definition}\label{TypeIIDef}
We classify type II non-classical systems as being any system consisting of an observable: $X$, a reduced density matrix: $\rho_{mkt}(t)$, and a time evolution of the type given by equation \ref{BMA}, whereby the assumptions of definitions \ref{define_class_diff} and \ref{TypeIDef} do not apply. That is both the observable, and the reduced density matrix are non-diagonal in the basis defined by the time evolution.
\end{definition}
As noted above, the time evolution, and in particular the simple form in terms of being a combination of orthogonal operators: $A_u$ and $A_d$, fixes the choice of basis. If both $\rho_{mkt}(t)$ and $f(X)$ are non-diagonal in this basis then the diagonal elements in $\mathcal{D}_0$ can no longer be interpreted as probabilities. That is:
\begin{align*}
E^t[f(X)]&=\sum_i\rho_{ii}f(X)_{ii}
\end{align*}
becomes:
\begin{align*}
E^t[f(X)]&=\sum_{i,j}\rho_{ij}f(X)_{ij}
\end{align*}
Whilst we can diagonalize $\rho_{mkt}(t)$, applying the necessary unitary transformation destroys the simple form for the time evolution. We illustrate this using a simple one step simulation.
\subsubsection{Simulating Type II Non-Classical Systems:}
\begin{itemize}
\item We assume the Hilbert space is given by $\mathbb{C}^N$, with $N=21$.
\item We start with a Dirac state: $\rho_0=|f_{11}\rangle\langle f_{11}|$.
\item We set $x_1=-1$, $x_{21}=1$, and assume evenly spaced values: $x_{i+1}=x_i+0.1$.
\item We carry out a single time-step, with $\delta t=0.01$, $\sigma=0.4$, and $\nu_u=\nu_d=0.36$.
\item We assume that $\rho_0$ has been diagonalized by a random unitary matrix: $U$. That is:
\begin{align*}
A_u&\rightarrow U^*A_uU\text{, }A_d\rightarrow U^*A_dU
\end{align*}
Whilst this is unrealistic, it serves to illustrate the impact that the loss of simplicity in the time evolution has.
\end{itemize}
Figure \ref{nice} shows the non-zero matrix elements for a standard simulation step. Ie, a type I simulation where there has been no pre-diagonalization using a random unitary matrix. The plot shows the non-central matrix elements. We exclude the large central element $\rho_{10,10}=0.9968$, so as to highlight the regular structure in this case.

Figure \ref{not} shows the same results for the simulation where we assume the starting reduced density matrix has been diagonalized by a random unitary matrix. There are 2 things to note. First that the regular structure of the off-diagonal elements has been lost in figure \ref{not}. Secondly, the magnitude of the off-diagonal elements are much smaller. This is because, figure \ref{nice} shows all non-zero elements, which spill out from the central point. In the type II simulation non-zero points are spread throughout the whole reduced density matrix even after the first time-step. It is worth emphasizing again that the use of the random unitary matrix is unlikely to be of practical significance. It simply serves to illustrate the difference between the type I, and type II non-classical systems.
\begin{figure}
\includegraphics[scale=1]{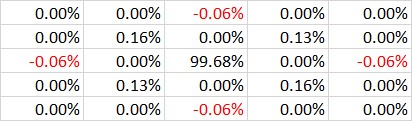}
\includegraphics[scale=0.86]{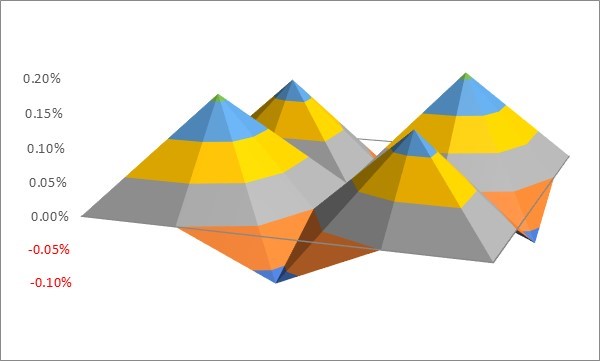}
\caption{Reduced density matrix after a single time-step, where we assume the reduced density matrix starts in a Dirac state. The plot shows the non-central matrix elements only. That is it does not depict the large central matrix element of $0.9968$. This helps to illustrate the regularity that is due to the simple form for the time-evolution.}\label{nice}
\end{figure}
\begin{figure}
\includegraphics[scale=1]{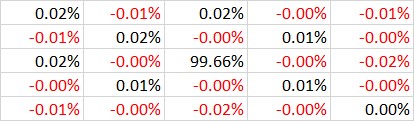}
\includegraphics[scale=0.86]{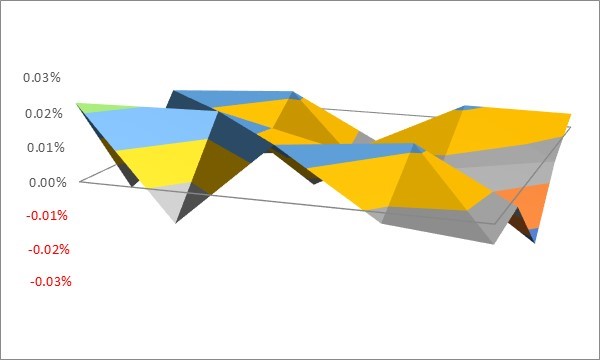}
\caption{Reduced density matrix after a single time-step, where we assume the reduced density matrix starts in a Dirac state, that has first been diagonalized by a random unitary matrix. By diagonalizing the reduced density matrix, the simple form for the time-evolution is no longer evident, and the regular structure of the non-central matrix elements is lost.}\label{not}
\end{figure}
\section{Quantifying Market Precision:}\label{sec_num}
\subsection{Metrics:}
To summarise the arguments made so far we have the following different modelling scenarios:
\begin{itemize}
\item Diagonal reduced density matrix: $\rho_{mkt}(t)$, representing the case where the uncertainty in the price is purely down to a lack of information regarding the future.
\item Diagonal matrices representing traded price operators for standard contracts. Ones where, assuming there is enough information (market reduced density matrix is in a pure state) the price is then determined.
\item Imperfect market trading mechanism represented by a non-diagonal reduced matrix.
\item Non standard contracts, or illiquid assets, mis-aligned with the main market reduced density matrix, represented by non-diagonal matrix operators.
\end{itemize}
In this section we illustrate two measures: $f:\mathcal{M}\rightarrow [0,1]$ (where $\mathcal{M}$ represents the set of density matrices in $\mathbb{C}^N$), which can be used to quantify the degree of market precision. In both cases we require the following:
\begin{enumerate}
\item[{\bf A}] $f(\rho_{mkt}(t))=0$ for diagonal $\rho_{mkt}(t)$. This represents the case where there is no lack of precision. Where, in addition to being diagonal, there is perfect information (ie $\rho_{mkt}=|f_k\rangle\langle f_k|$) the price is fully determined.
\item[{\bf B}] $f(\rho_{mkt}(t))=1$ for a pure state with non-zero variance. In this case, the uncertainty in the traded price outcome is due only to the market mechanism rather than a lack of information.
\end{enumerate}
We present two simple measurements that achieve these requirements. The first is based on the variance in the standard price operator (equation \ref{std_price}). Specifically in definition \ref{varmetric} we look at the ratio in the variance before and after a projection onto one of the eigenvectors for the market reduced density matrix.

A classical density matrix has the same eigenvectors as the traded price operator, meaning that for a classical density matrix: $P_{var}(\rho)=0$. Similarly, for a market in pure state, we have $\rho^i=\rho$. Thus: $P_{var}(\rho)=1$.
\begin{definition}\label{varmetric}
\begin{align}\label{varmetric_eq}
P_{var}(\rho)&=\frac{\max_i\big(Tr[X^2\rho^i]\big)}{Tr[X^2\rho]}\\
\text{Where: }\rho&=\sum_{i=1}^{n\leq N}\rho^i\text{, }\rho^i=\lambda_i|v_i\rangle\langle v_i|\nonumber
\end{align}
\end{definition}
The second is based on the ratio between the Shannon entropy calculated using the probability of finding each potential price $x_i$, and the Von-Neumann entropy of the market reduced density matrix. For a classical density matrix, these 2 measurements are the same, and we find: $P_{ent}(\rho)=0$. For a market in a pure state the Von-Neumann entropy is zero, and we find: $P_{ent}(\rho)=1$.
\begin{definition}\label{entmetric}
\begin{align}
P_{ent}(\rho)&=1-\frac{\rho\log(\rho)}{Tr[\sum_{i=1}^Np_i\log(p_i)]}\\
\text{Where: }p_i&=Tr[|f_i\rangle\langle f_i|\rho]\nonumber
\end{align}
\end{definition}
\subsection{Numerical Simulation Setup:}\label{num_sim}
In this section we use the following basic setup:
\begin{itemize}
\item We use $N=101$, with $x_1=-1$ to $x_{101}=+1$. The choice of $N$ is a balance between increasing the granularity of the data, versus the computational time together with the need for smaller time-steps to ensure accuracy \& stability (see for example \cite{Press} section 19.2).
\item We use a time-step of $\delta t=0.004$, and set $\sigma=0.4$.
\end{itemize}
The starting market reduced density matrix is constructed as follows:
\begin{align}\label{rho_0}
\rho_{mkt}(0)&=\sum_{i=1}^Nq_i^2|f_i\rangle\langle f_i|\text{, }q_i=\bigg(\frac{1}{\sqrt{2\pi\sigma_0^2}}\exp\Big(\frac{-x_i^2}{2\sigma_0^2}\Big)\bigg)^{0.5}\text{, }\sigma_0=0.05
\end{align}
\subsection{Simulation Detail:}\label{ent_lim}
To carry out the simulation, we focus on the metric: \ref{entmetric}. This metric is easier to apply in practice, than the metric based on variance, given by definition \ref{varmetric}, owing to the computational difficulty of extracting eigenvectors from large matrices. The simulation proceeds as follows:
\begin{enumerate}
\item[Step 1)] Start from the classical state given by \ref{rho_0}.
\item[Step 2)] Simulate 5000 steps of the non-classical diffusion, with $\nu_u/\nu_d=0.36$.
\item[Step 3)] At this point we split into 2 separate processes. {\bf Simulation 1} proceeds for 95000 further steps of non-classical diffusion.
\item[Step 4)] {\bf Simulation 2} proceeds for 95000 further steps of classical diffusion ($\nu_u/\nu_d=0$).
\end{enumerate}
\begin{figure}
\includegraphics[scale=0.8]{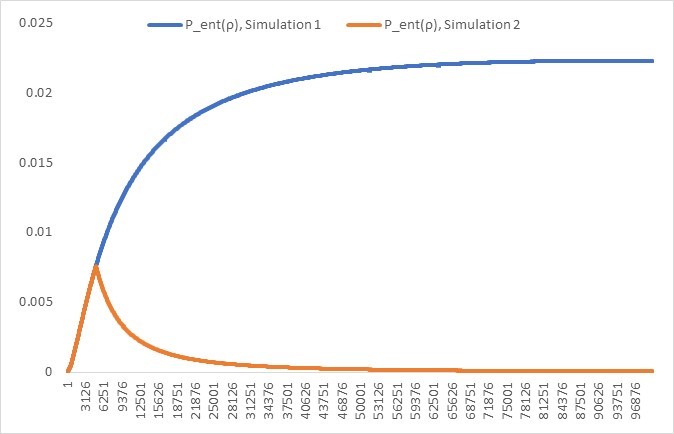}
\caption{The entropy based metric, described by definition \ref{entmetric}, for simulation 1 and simulation 2 as described above}\label{entmetric_chart}
\end{figure}
\begin{figure}
\includegraphics[scale=0.8]{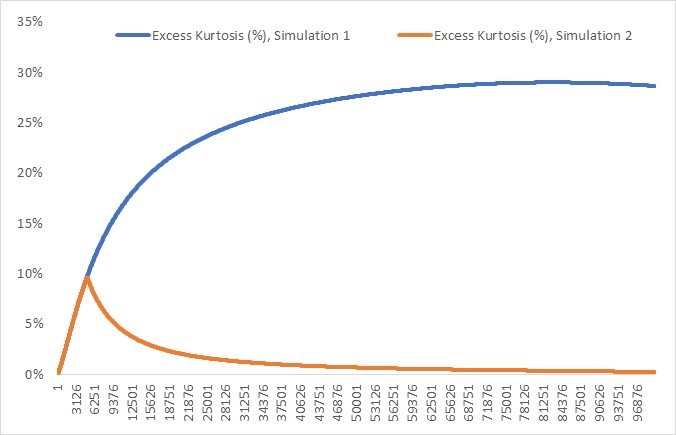}
\caption{The excess kurtosis for simulation 1 and simulation 2 as described above}\label{kurt_chart}
\end{figure}
\subsection{Simulation Results:}
The simulation results for the imprecision metric $P_{ent}(\rho)$ (given by definition \ref{entmetric}) are shown in figure \ref{entmetric_chart}. The resulting excess kurtosis for each simulation is shown in figure \ref{kurt_chart}. The results can be understood as follows:

\noindent\underline{First 5000 Steps:}
\begin{itemize}
\item For the first 5000 steps both simulations follow non-classical diffusion (proposition \ref{nonclass_prop}).
\item The market starts in a classical state. That is, all off-diagonal matrix elements are zero.
\item It therefore follows from proposition \ref{diff_orb} that the off-diagonal sums remain at zero in both simulations.
\item However, due to the non-classical terms described in proposition \ref{nonclass_prop} (for $\nu_u,\nu_d\neq 0$), the individual off-diagonal terms: $\rho_{i(i+j)}\neq 0$ for $j=2,4,\dots$.
\item Note that a matrix element: $\rho_{ii}$, under the nonclassical diffusion spreads to the off-diagonal containing: $\rho_{(i+1)(i-1)}$ and $\rho_{(i-1)(i+1)}$. Thus, whilst the diagonal sums are zero, for even values of $j$ individual matrix elements are non-zero.
\item This is illustrated in fig \ref{OD}, where we show the first 4 nonzero off-diagonals after 5000 time-steps.
\end{itemize}
\underline{Entropy Growth and Misalignment:}
\begin{itemize}
\item It is known (for example see \cite{HicksPHD} proposition 5.3.6) that given fixed classical probabilities: $p_i=Tr[|f_i\rangle\langle f_i|\rho]$, it is the classical density matrix that maximizes the Von-Neumann entropy.
\item The simulation starts at full precision. That is the market is perfectly aligned with/commutes with the main traded price operator. In fact, we have full precision and $P_{ent}(\rho)=0$.
\item However, as the non-zero matrix elements spread away from the diagonal, the Von-Neumann entropy grows at a slower rate than the equivalent classical entropy. Under the non-classical diffusion, precision is lost, and the market becomes {\em misaligned} with the main traded price operator.
\item To put it another way, for a classical process, the uncertainty is solely related to the lack of information. Under the non-classical process, some of the uncertainty is related to the imprecision/market misalignment.
\item Figure \ref{D2_chart} shows the growth in the off-diagonal sum of squared matrix elements:
\begin{align}\label{D2_metric}
||\mathcal{D}_2||&=\sum_{i=1}^{N-2}|\rho_{i(i+2)}|^2
\end{align}
As illustrated in figure \ref{OD}, the magnitude of the individual matrix elements in $\mathcal{D}_2$ grows under the non-classical diffusion, even though the sum of the elements remains at zero. Figure \ref{D2_chart} shows that in simulation 2, the classical diffusion quickly smooths these out, and the $||\mathcal{D}_2||$ metric quickly reduces. In simulation 1, the non-diagonal elements will eventually diffuse to zero (per proposition \ref{stat_pts}), but this takes much longer, with the maximum value for $||\mathcal{D}_2||$ being reached after 32K time-steps.
\end{itemize}
\begin{figure}
\centering
\includegraphics[scale=0.45]{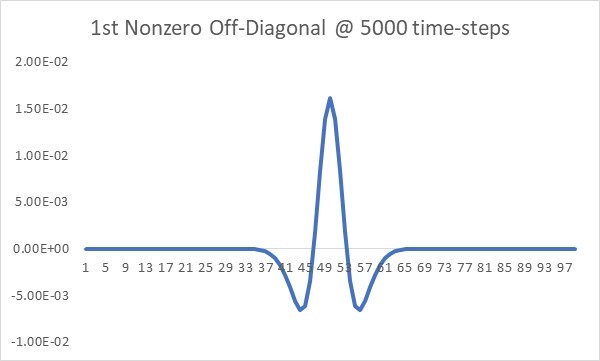}
\includegraphics[scale=0.45]{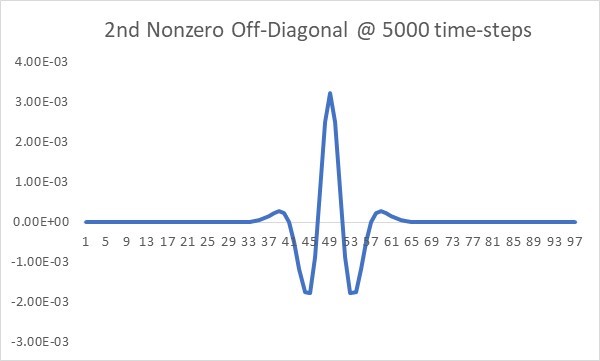}
\includegraphics[scale=0.45]{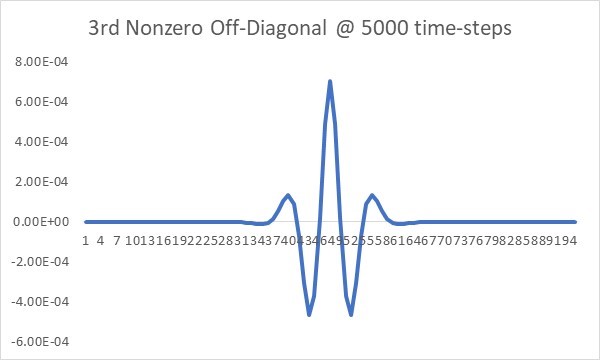}
\includegraphics[scale=0.45]{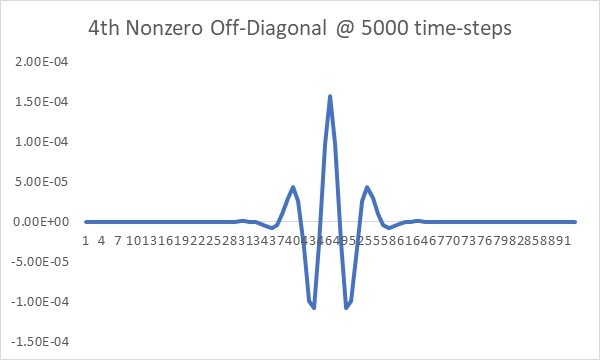}
\caption{The first four non-zero off-diagonals for simulation 1/2 after 5000 steps: $\rho_{i(i+j)},i=1\dots N$ for $j=2,4,6,8$}\label{OD}
\end{figure}
\begin{figure}
\includegraphics[scale=0.8]{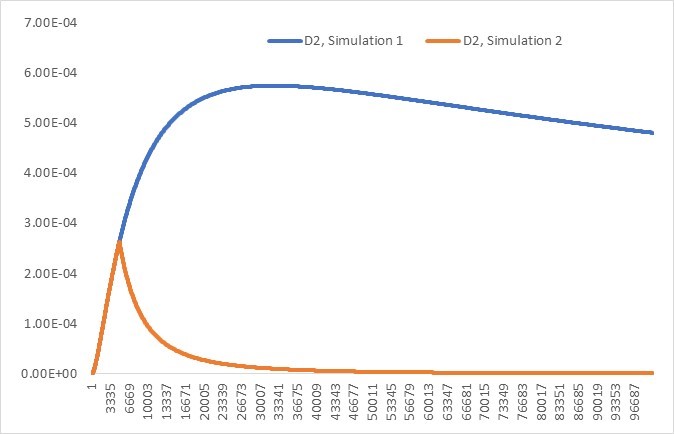}
\caption{The evolution of the $||\mathcal{D}_2||$ metric defined by equation \ref{D2_metric}}\label{D2_chart}
\end{figure}
\underline{Theoretical Entropy Limit:}
\begin{itemize}
\item After 5000 time-steps, we separate out simulation 1, which carries on with the non-classical diffusion, from simulation 2, which now diffuses classically.
\item As per proposition \ref{diff_orb}, the off-diagonal sums are zero.
\item Therefore, per proposition \ref{stat_pts}, under the classical diffusion, the market density matrix will eventually diffuse to the maximum entropy state.
\item That is $S(\rho)\rightarrow\log(N)$, and $P_{ent}(\rho)\rightarrow 0$, as $T\rightarrow\infty$.
\item Figure \ref{frob_dist} shows the distance in the Frobenius norm, from the maximum entropy limit: $\rho=\frac{1}{N}\mathbb{I}$.
\item In both cases the reduced density matrix approaches the maximum entropy limit, but does so at a slower rate in the non-classical case.
\end{itemize}
\begin{figure}
\includegraphics[scale=0.8]{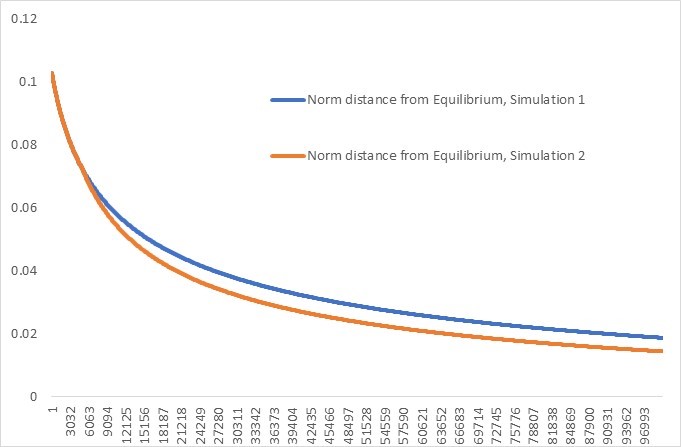}
\caption{Distance in the Frobenius norm from the maximum entropy state.}\label{frob_dist}
\end{figure}
\section{Conclusion:}
Generally, when using classical probability for the study of financial markets, one applies models whereby the uncertainty regarding future prices arises from a lack of information. For example a simple characterization might be that, whilst we know the current price, we don't know what outside events will lead to price increases or decreases in the future. One crucial factor regarding quantum probability is that it allows for uncertainty in what price one can trade at, even where there is perfect information.

Alternatively, as is argued in \cite{HicksPHD} section 5.3.4, one can describe simplified scenarios where markets have varying levels of information, but have the same underlying probabilities for the main traded price. In reality, financial markets do not consist of a single price (like a high street shop) but consist of a complex array of data, a variety of order types and contract types, and even market participants who are withholding their data. Ie, they may be waiting to trade but are yet to submit an order to the exchange. We do not necessarily argue that financial markets are fundamentally `quantum' in nature. Simply that quantum (ie non-commutative) probability provides a useful mathematical framework to represent their dynamics. We seek to build useful models that exhibit some crucial aspect of market behaviour, but do not necessarily make any assumptions regarding the `true' nature of financial market behaviour.

In this article, building on \cite{HicksOQS}, we have started by providing some simple results regarding the long-term behaviour of the Von-Neumann entropy. We also show how one can characterize the different modes of time-evolution by re-writing the models discussed in \cite{HicksPHD} and \cite{HicksOQS} in the standard form.

In section \ref{sec_precis} we go on to discuss the idea that uncertainty in prices can arise from an imperfect market mechanism, or traded prices that are slightly `mis-aligned'. That is, the market may have a fixed price for a small trade in a listed future, but what about a large trade, or a trade with non-standard contract terms. We show how one can design non-standard operators to represent such trades. Indeed, one key aspect of the quantum approach is the way in which one can incorporate the notion of the market misalignment, or alternatively a fundamental limit to the precision with which one can predict prices in advance, into the dynamics as an alternative source of uncertainty additional to growth in the information entropy. 
\section*{Declaration of Conflicting Interests}
The author declares no potential conflicts of interest with respect to the research, authorship, and/or publication of this article.


\begin{thebibliography}{99}
\bibitem{BagarelloHaven} Bagarello, F,: Haven, E,: ``The Role of Information in a Two-Traders Market'', {\em Physica A: Statistical Mechanics and its Applications} (2014)
\bibitem{Bouchaud} Donier, J,; Bouchaud, J-P,; Bonart, J,; Gould, M,: {\em Trades Quotes and Financial Markets Under The Microscope}. Cambridge University Press, 2018, ISBN:978-1-107-15605-0
\bibitem{BP} Breuer, H.P.; Petruccione, F.:
{\em The Theory of Open Quantum Systems}, Oxford University Press (2002)
\bibitem{ChanFong} Chan, K.; Fong W-M,: ``Trade Size, Order Imbalance, and The Volatility-Volume Relation'', {\em Journal Of Financial Economics}, 57.2 (2000)
\bibitem{Chrus} Chru{\'s}inski, D,; Pascazio, S: {\em A Brief History of the GKLS Equation}, arxiv.org/pdf/1710.05993
\bibitem{WhatMSP_AL} Cornell, B.: ``What moves stock prices: Another look'', {\em The Journal of Portfolio Management} (2013) Apr 30;39(3):pp32-8.
\bibitem{WhatMSP} Cutler, D.M,; Poterba, J.M.; Summers, L.H,: ``What Moves Stock Prices'', {\em Journal Of Portfolio Management} 15 (3): pp 4-12, doi: 10.3905/jpm.1989.409212
\bibitem{GKS} Gorini, V.; Kossakowski, A.; Sudarshan, E.C.G: {\em Completely Positive Dynamical Semigroups of N-level systems}, J. Maths. Phys. {\bf 17}, 821 (1976)
\bibitem{Hall} Hall, B.C: {\em Quantum Theory for Mathematicians}, Springer 2013, ISBN: 978-1-4614-7116-5
\bibitem{HicksPHD} Hicks, W,: {\em Applications of quantum stochastic calculus and open quantum systems, in the modelling of the financial market.} Doctoral (Phd) thesis, Memorial University of Newfoundland (2024).
\bibitem{HicksOQS} Hicks, W,:  ``Information Entropy of the Financial Market: Modelling Random Processes Using Open Quantum Systems''. {\em Quantum Economics and Finance} (2024) https://doi.org/10.1177/29767032241279990
\bibitem{Lindblad} Lindblad, G: {\em On the generators of quantum dynamical semigroups} Commun. Math. Phys. {\bf 48}, 119 (1976)
\bibitem{Press} Press, W.H,; Teukolsky S.A,; Vetterling, W.T,; Flannery, B.P: {\em Numerical Recipes in C++, The Art of Scientific Computing}, Second Edition, Cambridge University Press.
\bibitem{PerezGarcia} Perez-Garcia, D,; Wolf, M.M,; Petz, D,; Ruskai, M,B,: {\em Contractivity of positive and trace preserving maps under $L_p$ norms}, arxiv:math-ph/0601063
\end{thebibliography}
\end{document}